\definecolor{myurlcolor}{rgb}{0,0,0.4}
\definecolor{mycitecolor}{rgb}{0,0.5,0}
\definecolor{myrefcolor}{rgb}{0.5,0,0}
\newtheorem{theorem}{Theorem}[section]
\newtheorem{corollary}{Corollary}
\newtheorem{proposition}{Proposition}[section]
\newtheorem{definition}{Definition}[section]
\newtheorem{example}{Example}[section]
\newtheorem*{proof*}{Proof}
\newcommand{\be}{\begin{equation}}
\newcommand{\ee}{\end{equation}}
\newcommand{\bea}{\begin{eqnarray}}
\newcommand{\eea}{\end{eqnarray}}
\numberwithin{equation}{section}
\numberwithin{theorem}{section}
\title{Schwinger's picture of quantum mechanics: 2-groupoids and symmetries}
\date{}
\author{F. M. Ciaglia$^{1,7}$ \href{https://orcid.org/0000-0002-8987-1181}{\includegraphics[scale=0.7]{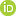}}, F. Di Cosmo$^{2,3,8}$ \href{https://orcid.org/0000-0003-0256-5913}{\includegraphics[scale=0.7]{ORCID.png}}, A. Ibort$^{2,3,9}$ \href{https://orcid.org/0000-0002-0580-5858}{\includegraphics[scale=0.7]{ORCID.png}}, \\ G. Marmo$^{4,5,10}$ \href{https://orcid.org/0000-0003-2662-2193}{\includegraphics[scale=0.7]{ORCID.png}}, L. Schiavone$^{3,4,6,11}$  \href{https://orcid.org/0000-0002-1817-5752}{\includegraphics[scale=0.7]{ORCID.png}} \\
\footnotesize{$^{1}$\textit{ Max Planck Institute for Mathematics in the Sciences, Leipzig, Germany}} \\
\footnotesize{$^{2}$\textit{ ICMAT, Instituto de Ciencias Matem\'{a}ticas (CSIC-UAM-UC3M-UCM)}} \\
\footnotesize{$^{3}$\textit{Depto. de Matem\'aticas, Univ. Carlos III de Madrid, Legan\'es, Madrid, Spain}} \\
\footnotesize{$^{4}$\textit{ INFN-Sezione di Napoli, Naples, Italy}} \\
\footnotesize{$^{5}$\textit{ Dipartimento di Fisica ``E. Pancini'', Universit\`a di Napoli Federico II,  Naples, Italy}} \\
\footnotesize{$^{6}$\textit{ Dipartimento di Matematica e Applicazioni "Renato Caccioppoli", Università di Napoli Federico II, Napoli, Italy}} \\
\footnotesize{$^{7}$\textit{ e-mail: \texttt{florio.m.ciaglia[at]gmail.com} and \texttt{ciaglia[at]mis.mpg.de}}} \\
\footnotesize{$^{8}$\textit{ e-mail: \texttt{fcosmo[at]math.uc3m.es}}} \\
\footnotesize{$^{9}$\textit{ e-mail: \texttt{albertoi[at]math.uc3m.es}}} \\
\footnotesize{$^{10}$\textit{ e-mail: \texttt{marmo[at]na.infn.it}}} \\ 
\footnotesize{$^{11}$\textit{ e-mail: \texttt{luca.schiavone[at]unina.it}}}  
}
\begin{document}

\maketitle

\begin{abstract}
Starting from the groupoid approach to Schwinger's picture of Quantum Mechanics, a proposal for the description of symmetries in this framework is advanced.
It is shown that, given a groupoid $G\rightrightarrows \Omega$ associated with a (quantum) system, there are two possible descriptions of its symmetries, one ``microscopic'', the other one ``global''.
The microscopic point of view leads to the introduction of an additional layer over the grupoid $G$, giving rise to a suitable algebraic structure of 2-groupoid.
On the other hand, taking advantage of the notion of group of bisections of a given groupoid, the global perspective allows to construct a group of symmetries out of a 2-groupoid.
The latter notion allows to introduce an analog of the Wigner's theorem for quantum symmetries in the groupoid approach to Quantum Mechanics.
\end{abstract}

\section{Introduction}

As a compromise between realistic descriptions and computability, abstract mathematical models have been introduced as approximated versions of physical systems in order to provide satisfactory explanations of observed phenomena and predictions of new ones. In such a scheme symmetries have immediately played a crucial role. Relativity, Quantum and Classical Mechanics, Statistical Mechanics and almost every area of physics benefits from the presence of symmetries, which in a certain sense put some order into the mathematical description.
Vaguely, a symmetry of a physical system could be defined as a family of ``transformations'' of the system preserving specific ``properties'' of it.
Depending on the explicit choice of the ``properties'' preserved, a finer distinction among symmetry transformations may be given in terms of dynamical or structural symmetries.

In this work we investigate how to rigorously introduce symmetries in Schwinger's picture of Quantum Mechanics (see \cite{Schwinger-2000, Sc51-1} for further details) which has been recently connected to the theory of groupoids by some of us \cite{C-DC-I-M-2020,C-DC-I-M-02-2020,C-I-M-02-2019,C-I-M-03-2019, C-I-M-05-2019,C-I-M-2018}.
As it will be shown in the rest of the paper, the groupoid approach, in contrast with Schwinger's original analysis which is close to the standard quantum mechanical treatment of symmetries based on the theory of (anti-)unitary transformations on Hilbert spaces (see for instance \cite[Sect. II]{Sc51-1}), permits a double description of symmetries, which we will call ``local'' or ``microscopic'' and ``global'' or ``macroscopic'', the latter directly related to the traditional modelisation of symmetries in terms of elements of a group.

It is gratifying to observe that the theory of groupoids, that was brought to a mathematical maturity in the smooth category by K. Mackenzie \cite{lie-groupoids}, could find, as a consequence of the results described in this paper and subsequent works, some of its most natural applications in the theory of symmetry of quantum systems, following closely the spirit of G.W. Mackey's oringinal contributions that pioneered the modern theory of groupoids \cite{Ma63,Ma66} as a natural continuation of the theory of groups and their representations.

Before diving into details, it is worth briefly recalling the basic ingredients of Schwinger's picture which are necessary to the developments of the paper.
We start assuming that the description of a quantum  system is given in terms of a groupoid $G\rightrightarrows \Omega$, where $\Omega$ is the space of objects.
The physical interpretation of elements
$$
x,y,z\cdots\,\in \Omega
$$
may be linked to the possible outcomes of a given set of observables (e.g. the outcomes of a Stern-Gerlach apparatus), and also to more abstract notions  as in the case of the double-slit experiment analysed in \cite{C-I-M-05-2019}.
Morphisms
$$
\alpha\,\colon\,x\,\rightarrow\,y
$$
may be interpreted as ``abstract amplitudes'' or ``virtual transitions'' among outcomes of experiments, and thus, in the rest of the paper, we will refer to them simply as ``transitions''.
Two maps,
\begin{equation}
s\,\colon\,G\rightarrow \Omega\,, \qquad t\,\colon\,G\rightarrow \Omega\,,
\end{equation}
called respectively the source and the target, associate every transition with its input and output objects.
For instance, $s(\alpha)= x$ and $t(\alpha)=y$.
Transitions, $\alpha\,\colon\,x\,\rightarrow\,y$ and $\beta\,\colon\,y'\,\rightarrow\,z$ can be composed, the product being denoted $\beta \circ \alpha$, whenever the source of $\beta$ coincides with the target of $\alpha$.
The  axioms satisfied by this composition law are:
\begin{itemize}
\item associativity;
\item for every outcome $x\in \Omega$ there is a transition, $1_x$ which leaves invariant any other composable transition, i.e.,
\begin{equation}
\alpha \circ 1_x = 1_y \circ \alpha = \alpha\,;
\end{equation}
\item for every transition $\alpha\,\colon \,x\rightarrow \, y$, there exists an inverse transition $\alpha^{-1}\,\colon\,y\,\rightarrow\,x$, such that $\alpha\circ \alpha^{-1}=1_y$ and $\alpha^{-1}\circ\alpha = 1_x$ (this property implements Feynman's principle of ``microscopic reversibility''\cite{feynman-thesis}).
\end{itemize}

The set of transitions $\alpha\,\colon\,x\,\rightarrow\,y$, with $x,\,y$ fixed will be denoted by $G(y,x)$ and the set of transitions $G(x,x)$ which form a group called the isotropy group at $x$, will be denoted by $G_x$.

According to modern algebraic formulations of quantum  (field) theories \cite{Haag-1996}, a central tool in the description of a quantum system is given by the  non-commutative algebra generated by the observables of the system.
In the groupoid picture under consideration, this algebra is realized as a suitable completion of the space  of compactly supported, complex-valued functions on $G$ equipped with an involution operator, say $\ast$, and a  non-commutative convolution product.
Under suitable conditions, for instance that the groupoid $G$ is a measure groupoid \cite{Ma63}, the space of compactly supported, complex-valued continuous functions on the groupoid, completed via a suitable norm, becomes a $C^*$-algebra \cite{Renault-1980, buneci} or a von Neumann algebra \cite{rings-operators, hahn}, referred to as the groupoid algebra of the groupoid, and denoted in what follows by $C^*(G)$,  and its real elements represent the observables of the system under investigation.
Then, as it happens in algebraic quantum field theory \cite{Haag-1996,Takesaki-2002},  the mathematical states  of this algebra (i.e., normalized, positive linear functionals) are associated with the physical states of the system.

In the following, we will focus on the purely algebraic aspects of the description of symmetries in the groupoid formalism.
This means that we will consider purely algebraic groupoids, and no assumptions on possible  topological and/or measure-theoretical properties of the groupoids will be made.
This choice is dictated by our wish, at this moment, to highlight the conceptual role of symmetries in the groupoid formalism  without having to worry too much of the strictly technical issues that come into play when topological or measure groupoids are considered.
Of course, we plan to address these technical issues in future publications.

It is worth stressing that, despite the simplifying  technical assumption on the nature of the groupoids mentioned before, the formalism is ``powerful'' enough to present relevant results.
For instance, in the case of discrete groupoids, it is possible to show that every state $\rho$ on the groupoid algebra determines   a positive definite function $\varphi_{\rho}$ on the groupoid by setting
\begin{equation}
\rho (\delta_{\alpha}) := \varphi_{\rho}(\alpha)\,,
\end{equation}
where $\delta_{\alpha}$ is the function on $G$ which takes the value $1$ only at the transition $\alpha$, and a natural connection between (a suitable family of) these states  and Sorkin's quantum measures \cite{Sorkin-1994}   has been shown in \cite{C-I-M-05-2019}.

As previously mentioned the notion of ``transformation''  is inherent to the notion of symmetry.
In our previous papers on groupoids and the foundations of Quantum Mechanics \cite{C-I-M-02-2019, C-I-M-03-2019}, a proposal for a theory of transformations of quantum systems was advanced.
It was argued there that a theory of transformations of quantum systems should be implemented as a set of rules $\xi,\,\zeta,\,\cdots$ describing further relations or ``substitutions'' among transitions.
In other words, the substitution $\xi\,\colon\,\alpha\,\Rightarrow \,\beta$, means that the actual transition $\alpha$ can be either transformed in the transition $\beta$ (for instance by modifying the experimental setting) or replaced formally in the description of the theory. In both situations the study of the possible substitutions of the given system could provide a helpful insight in its structure.
Actually much more is true.
Many of the structural results on physical theories come from imposing certain symmetries on them.
As it will be discussed in what follows, the natural way to introduce the notion of ``symmetry'' in the groupoid framework is by means of an appropriate collection of substitutions satisfying a few consistency axioms.
Such ``symmetries'' will impose serious restrictions on the system as it will be required that either the dynamics, or the states of the system, or both will be invariant.

Thus, the paper will be organized as follows.  Section 2 will be devoted to the discussion of the notion of symmetry by means of the theory of substitutions. It will be shown that the natural axioms for a family of substitutions makes them a 2-groupoid and, in this context, a 2-groupoid $S \rightrightarrows G \rightrightarrows \Omega$ intended to describe the symmetry of a quantum system will be called a \textit{symmetroid}.   In Sect. 3, a structure theorem for groupoids will be proved closely relating them to the theory of group actions, that is, it will be shown that any groupoid is isomorphic to a quotient of the action groupoid defined by the group of bisections of the given groupoid acting on itself.  Then, in Sect. 4, these ideas will be used to show that any symmetroid describes a group of (anti-)automorphisms of the given groupoid, and that this allows to define the canonical group of symmetries of the theory.   Such construction could be understood as the abstract setting of Wigner's theorem \cite{wigner} in the groupoidal description of quantum systems.   Finally a few simple examples that will illustrate the many aspects of the theory will be exhibited.

\section{Symmetries: The ``microscopic'' point of view}
\subsection{``Microscopic'' description of symmetries}

The features of the groupoid picture lead us to take firstly a ``microscopic'', or extremely detailed, approach to the notion of symmetry.
Specifically, given a groupoid $G\rightrightarrows\Omega$, the notion of transformation relevant to discuss the notion of symmetry, is implemented by means of a family of substitutions, where a substitution $\xi$ changes the transition $\alpha$ into the transition $\beta$; it will be denoted $\xi\colon \alpha\Rightarrow\beta$, with $\alpha,\beta\in G$, and diagrammatically  pictured as:
\begin{equation}
\begin{tikzcd}[column sep=large]
& \alpha \arrow[dd, "\;\xi", Rightarrow] \arrow[dr, bend left = 30] & \\
\bullet \arrow[ur, dash, bend left = 30] \arrow[dr, dash, bend right = 30] &  & \bullet \\
& \beta \arrow[ur, bend right = 30] &
\end{tikzcd}
\end{equation}

Referring to what is mentioned in the introduction about the distinction between structural and dynamical symmetries, a dynamical symmetry  will consist of a family of substitutions   leaving invariant a given state $\rho$ (or a family of them), or leaving invariant a given Hamiltonian $h\,\colon\,G\,\rightarrow\,\mathbb{C}$.
In the second case $h$ is invariant under the substitution $\xi\,\colon\,\alpha\,\Rightarrow\,\beta$ if $h(\alpha)= h(\beta)$.
On the other hand, we will say that $\rho$ is invariant under the substitution $\xi\,\colon\,\alpha\,\Rightarrow\,\beta$ if the associated function is invariant, i.e., $\varphi_{\rho}(\alpha)=\varphi_{\rho}(\beta)$.


The minimal set of axioms that a family of substitutions $S$ must satisfy to be considered a ``bona fide'' implementation of the notion of symmetry is the following:
\begin{itemize}
\item they can be composed, i.e., there exists a composition law, $\circ_{V}$, called ``vertical composition'', such that if
$$
\xi\,\colon\,\alpha \,\Rightarrow \,\,\beta\,,\quad \zeta\,\colon\,\beta\,\Rightarrow\,\gamma
$$
are substitutions in $S$, then there exists
$$
\zeta\circ_V\xi\,\colon\,\alpha\,\Rightarrow\,\gamma
$$
and $\zeta\circ_V\xi\,\in \,S$.
\item The composition law is associative.
\item There are ``trivial'' substitutions, i.e., substitutions $1_{\alpha},\, 1_{\beta}$, etc., such that
\begin{equation*}
\xi\circ_V 1_{\alpha}=\xi\,,\quad1_{\beta}\circ_V \xi = \xi\,.
\end{equation*}
\item Given any substitution $\xi\,\colon\,\alpha\,\Rightarrow\,\beta \: \in\, S$, such substitution can be ``undone'', i.e., there is another substitution $\xi^{-1}\,\in \,S$, $\xi^{-1}\,\colon\,\beta\,\Rightarrow\,\alpha$, such that
\begin{equation*}
\xi^{-1}\circ_V\xi = 1_{\alpha}\,,\quad\xi\circ_V\xi^{-1} = 1_{\beta}\,.
\end{equation*}
\end{itemize}
The vertical composition of substitutions will be denoted diagramatically as:
\begin{equation}
\begin{tikzcd}
 & \alpha \arrow[d, "\xi", Rightarrow] \arrow[dr, bend left = 30] & \\
\bullet \arrow[r, dash] \arrow[ur, dash, bend left = 30] \arrow[dr, dash, bend right = 30] & \beta \arrow[d, "\zeta", Rightarrow] \arrow[r] & \bullet \\
& \gamma \arrow[ur, bend right = 30] &
\end{tikzcd} \,\simeq \, \begin{tikzcd}
 & \alpha \arrow[dd, "\zeta \circ_V \xi", Rightarrow] \arrow[dr, bend left = 30] & \\
\bullet \arrow[ur, dash, bend left = 30] \arrow[dr, dash, bend right = 30] &  & \bullet \\
& \gamma \arrow[ur, bend right = 30] &
\end{tikzcd}
\end{equation}

The family of substitutions $S$, together with obvious source and target maps
$$
s,\,t\,\colon\,S\,\rightrightarrows G\,,\quad s(\xi)= \alpha\,,\quad t(\xi)=\beta\,,
$$
where $\xi\,\colon\,\alpha\,\Rightarrow\,\beta$, defines a groupoid structure on $G$. There is, however, an additional structure present on $S\,\rightrightarrows\, G$ that has its origin in the groupoid structure already present on $G\,\rightrightarrows\,\Omega$, and that will be called the horizontal composition $\circ_H$. Given two substitutions $\xi\,\colon\,\alpha\,\Rightarrow\,\beta$ and $\xi' \,\colon\, \alpha'\,\Rightarrow\,\beta'$ such that $\alpha,\,\alpha'$ and $\beta,\,\beta'$ are composable on $G$, i.e.,
$$
\alpha\,\colon\,x\,\rightarrow\,x'\,,\quad \alpha'\,\colon\,x'\,\rightarrow\,x''
$$
and
$$
\beta\,\colon\,y\,\rightarrow\,y'\,,\quad \beta'\,\colon\,y'\,\rightarrow\,y''\,,
$$
then we define the substitution
$$
\xi'\circ_H \xi \,\colon\,\alpha' \circ \alpha\,\Rightarrow\,\beta'\circ\beta\,.
$$
It will be denoted diagramatically as:
\begin{equation}
\begin{tikzcd}
 & \alpha \arrow[dd, "\xi", Rightarrow] \arrow[dr, bend left = 30] &  & \alpha' \arrow[dr, bend left = 30] \arrow[dd, "\xi'", Rightarrow] & \\
\bullet \arrow[ur, dash, bend left = 30] \arrow[dr, dash, bend right = 30] &  & \bullet \arrow[ur, dash, bend left = 30] \arrow[dr, dash, bend right = 30] & & \bullet  \\
& \beta \arrow[ur, bend right = 30] & & \beta' \arrow[ur, bend right = 30]
\end{tikzcd} \,\simeq\,
\begin{tikzcd}
 & \alpha' \circ \alpha \arrow[dd, "\xi' \circ_H \xi", Rightarrow] \arrow[dr, bend left = 30] & \\
\bullet \arrow[ur, dash, bend left = 30] \arrow[dr, dash, bend right = 30] &  & \bullet \\
& \beta' \circ \beta \arrow[ur, bend right = 30] &
\end{tikzcd}
\end{equation}

The horizontal composition satisfies some obvious axioms:
\begin{itemize}
\item associativity;
\item existence of units. Given
\begin{equation*}
\xi\,\colon\,\alpha\,\Rightarrow\,\beta\; \mathrm{with}\;\alpha\,\colon\,x\,\rightarrow\,x'\;\mathrm{and}\;\beta\,\colon\,y\,\rightarrow\,y'\,,
\end{equation*}
we consider the substitutions
\begin{equation*}
1_{y,x}\,\colon\,1_x\,\Rightarrow\,1_y\; \mathrm{and}\; 1_{y',x'}\,\colon\,1_{x'}\,\Rightarrow\,1_{y'}\,,
\end{equation*}
such that
\begin{equation*}
\xi\circ_H 1_{y,x}=\xi \,,\quad 1_{y',x'} \circ_H \xi = \xi\,.
\end{equation*}
\item existence of inverses, i.e., given $\xi\,\colon\,\alpha\,\Rightarrow\,\beta$, there is another substitution $\xi^{-H}\,\colon\,\alpha^{-1}\Rightarrow \, \beta^{-1}$, such that
\begin{equation*}
\xi^{-H}\circ_H\xi = 1_{y,x}\; \mathrm{and}\; \xi\circ_H\xi^{-H}=1_{y', x'}\,.
\end{equation*}
\end{itemize}

The two composition rules, the horizontal and the vertical ones, satisfy an additional compatibility condition, called exchange identity:
\begin{equation}
\left( \zeta\circ_V\xi \right)\circ_H\left( \zeta'\circ_V \xi' \right) = \left( \zeta'\circ_H\zeta \right)\circ_V\left( \xi'\circ_H \xi \right)\,,
\end{equation}
which is diagramatically pictured as follows
\begin{equation} \label{Eq: compatibility}
\begin{tikzcd}
 & \alpha \arrow[dd, "\zeta \circ_V \xi", Rightarrow] \arrow[dr, bend left = 30] &  & \alpha' \arrow[dr, bend left = 30] \arrow[dd, "\zeta' \circ_V \xi'", Rightarrow] & \\
\bullet \arrow[ur, dash, bend left = 30] \arrow[dr, dash, bend right = 30] &  & \bullet \arrow[ur, dash, bend left = 30] \arrow[dr, dash, bend right = 30] & & \bullet  \\
& \gamma \arrow[ur, bend right = 30] & & \gamma' \arrow[ur, bend right = 30]
\end{tikzcd} \,\simeq\, \begin{tikzcd}
 & \alpha \circ \alpha' \arrow[d, "\xi' \circ_H \xi", Rightarrow] \arrow[dr, bend left = 30] & \\
\bullet \arrow[r, dash] \arrow[ur, dash, bend left = 30] \arrow[dr, dash, bend right = 30] & \beta'\circ \beta \arrow[d, "\zeta' \circ_H \zeta", Rightarrow] \arrow[r] & \bullet \\
& \gamma' \circ \gamma \arrow[ur, bend right = 30] &
\end{tikzcd}
\end{equation}
Both sides of \eqref{Eq: compatibility} are equivalent to
\begin{equation}
\begin{tikzcd}
 & \alpha \arrow[d, "\xi", Rightarrow] \arrow[dr, bend left = 30] & & \alpha' \arrow[dr, bend left = 30] \arrow[d, "\xi'", Rightarrow] & \\
\bullet \arrow[r, dash] \arrow[ur, dash, bend left = 30] \arrow[dr, dash, bend right = 30] & \beta \arrow[d, "\zeta", Rightarrow] \arrow[r] & \bullet \arrow[ur, dash, bend left = 30] \arrow[dr, dash, bend right = 30] \arrow[r, dash] & \beta' \arrow[d, "\zeta'", Rightarrow] \arrow[r] & \bullet \\
& \gamma \arrow[ur, bend right = 30] & & \gamma' \arrow[ur, bend right = 30] &
\end{tikzcd}
\end{equation}

The structure $S\,\rightrightarrows\,G\,\rightrightarrows\,\Omega$ satisfying all previous axioms is called a $2$-groupoid and is a particular instance of a $2$-category (the first generalization of the notion of category coming from homotopy theory \cite{2_category_homotopy}).
Note that $S\,\rightrightarrows\, G$ is not a groupoid with respect to the horizontal composition $\circ_H$. Actually the source and target maps $s,t \colon (S, \circ_H) \to G$, are homomorphisms (covariant functors),
\begin{equation}
s(\xi' \circ \xi) = s(\xi') \circ s(\xi) = \alpha' \circ \alpha\,,
\end{equation}
for any $\xi \colon \alpha \Rightarrow \beta$, $\xi' \colon \alpha' \Rightarrow \beta'$, and thus  $S\,\rightrightarrows\, G\,\rightrightarrows\,\Omega$ is not a double groupoid (see \cite{brown} for the definition).

We conclude this first discussion by saying that the ``microscopic'' description of a symmetry in the groupoid picture of Quantum Mechanics is provided by a $2$-groupoid $S\,\rightrightarrows\,G\,\rightrightarrows\,\Omega$: Such structure will be called in what follows a ``symmetroid'' of the theory.

\subsection{The canonical symmetroid associated with a groupoid}

Since a groupoid is defined in terms of transitions which may be read as transformations between objects, it is reasonable to ask if there exists a built-in symmetry theory at the level  of the groupoid $G\,\rightrightarrows\,\Omega$, describing a given quantum system.
The answer is in the affirmative and we will now introduce two symmetroids which are naturally associated with a groupoid.

For the sake of simplicity we will start from the construction of the canonical little symmetroid,
$$
S_0\,\rightrightarrows\,G\,\rightrightarrows \,\Omega
$$
which is a subgroupoid of a larger groupoid which will be called the canonical symmetroyd and denoted
$$
S(G)\,\rightrightarrows\,G\,\rightrightarrows\,\Omega
$$
This latter notion will play a relevant role in the rest of the paper.

Let us start with a quantum system described by a groupoid $G\,\rightrightarrows \,\Omega$. For any $\gamma\in G_x$, i.e., $\gamma\,\colon\,x\,\rightarrow\,x$, consider the following family of substitutions
\begin{equation}
\xi^R_{\gamma}\,\colon\,\alpha\,\Rightarrow\,\beta=\alpha\circ\gamma^{-1}\,,
\end{equation}
i.e., we replace $\alpha\,\colon\,x\,\rightarrow\,y$ with its composition on the right with $\gamma^{-1}$. Similarly, if $\delta\in G_y$, we define
\begin{equation}
\xi_{\delta}^L\,\colon\,\alpha\,\Rightarrow\,\beta = \delta\circ\alpha\,.
\end{equation}
Notice that the substitutions $\xi_{\gamma}^R,\,\xi_{\delta}^L$ transform transitions in $G(y,x)$ into themselves, i.e., $\xi_{\gamma}^R,\,\xi_{\delta}^L$ map $G(y,x)$ into $G(y,x)$.

The family of substitutions $\xi^L_{\delta},\,\xi^R_{\gamma}$ satisfies the axioms of a $2$-groupoid over $G$. The vertical composition is defined in the obvious way, i.e.,
$$
\xi^{\cdot}_{\gamma}\circ_V \xi^{\cdot}_{\gamma'} = \xi^{\cdot}_{\gamma\circ\gamma'}\,,
$$
where $\cdot$ stands both for $R,\,L$. Clearly
\begin{equation}
\left( \xi^{\cdot}_{\gamma}\circ_V \xi^{\cdot}_{\gamma'} \right) \circ_V \xi^{\cdot}_{\gamma''} = \xi^{\cdot}_{\left( \gamma\circ\gamma'\right)\circ\gamma''} = \xi^{\cdot}_{ \gamma\circ \left( \gamma'\circ\gamma''\right)} =  \xi^{\cdot}_{\gamma}\circ_V \left( \xi^{\cdot}_{\gamma'} \circ_V \xi^{\cdot}_{\gamma''} \right)\,.
\end{equation}
Moreover,
\begin{equation}
\left( \xi^{R}_{\gamma^{-1}}\circ \xi^{R}_{\gamma} \right) \left( \alpha \right) = \xi^R_{1_x} (\alpha) = \alpha = \xi^L_{1_y}(\alpha) = \left( \xi^{L}_{\delta^{-1}}\circ \xi^{L}_{\delta} \right) \left( \alpha \right) \,,
\end{equation}
then
\begin{equation}
\left( \xi^{\cdot}_{\gamma} \right)^{-1} = \xi^{\cdot}_{\gamma^{-1}}\,.
\end{equation}
Finally,
\begin{equation}
\left( \xi^{\cdot}_{\gamma'}\circ_V \xi^{\cdot}_{\gamma} \right) \circ_H \left( \xi^{\cdot}_{\eta'}\circ_V \xi^{\cdot}_{\eta} \right) = \left( \xi^{\cdot}_{\gamma'\circ\gamma} \right) \circ_H \left( \xi^{\cdot}_{\eta'\circ\eta} \right)
\end{equation}
which implies
\begin{equation}
\begin{split}
\left( \xi^{\cdot}_{\gamma'}\circ_V \xi^{\cdot}_{\gamma} \right) \circ_H \left( \xi^{\cdot}_{\eta'}\circ_V \xi^{\cdot}_{\eta} \right) \left( \beta \circ \alpha \right) = \\
= \xi^{\cdot}_{\eta'\circ\eta}\left( \beta \right) \circ \xi^{\cdot}_{\gamma'\circ \gamma}\left( \alpha \right) \,.
\end{split}
\end{equation}
Furthermore,
\begin{equation}
\begin{split}
\left( \xi^{\cdot}_{\eta'}\circ_H \xi^{\cdot}_{\gamma'} \right) \circ_V \left( \xi^{\cdot}_{\eta}\circ_H \xi^{\cdot}_{\gamma} \right) \left( \beta \circ \alpha \right) = \\
= \left( \xi^{\cdot}_{\eta'}\circ_H \xi^{\cdot}_{\gamma'} \right) \left( \xi^{\cdot}_{\eta}\left(  \beta \right)\circ \xi^{\cdot}_{\gamma}\left( \alpha \right) \right)\,,
\end{split}
\end{equation}
which agrees with the previous one so that the exchange identity is satisfied.

However, this is not the end of the story, since there is another family of substitutions that will play an instrumental role in what follows. We call natural inversions the substitutions
\begin{equation}
\tau_{\alpha}\,\colon\,\alpha\,\Rightarrow\,\alpha^{-1}\,,
\end{equation}
any of which reverses the corresponding arrow $\alpha\,\colon\,x\,\rightarrow\,y$. Notice that
\begin{equation}
\tau_{\gamma\circ\alpha}\,\colon\,\gamma\circ\alpha \,\Rightarrow\, \alpha^{-1}\circ\gamma^{-1}\,,
\end{equation}
i.e.,
\begin{equation}
\tau_{\gamma\circ\alpha} = \xi^{R}_{\gamma}\circ_V \tau_{\alpha}\,.
\end{equation}
Then, we define the canonical little symmetroid\footnote{To include $\tau_{\alpha}$ is necessary to relax the definition of $2$-groupoid, in such a way that the source and target maps can be antihomomorphisms.} of $G\,\rightrightarrows\,\Omega$ as the $2$-groupoid $S_0(G) \,\rightrightarrows\,G\,\rightrightarrows\,\Omega$ generated by $\xi^{\cdot}_{\gamma},\,\tau_{\alpha}$, with $\gamma\in G_x$ and $G\ni \alpha\,\colon\,x\,\rightarrow\,y$.

An important observation is that the $2$-groupoid $S_0\,\rightrightarrows \,G$ is not connected, in fact the source and target of $\xi^{\cdot}_{\gamma}$ is $\alpha$ and $\xi^{\cdot}_{\gamma}\left( \alpha \right)$, so they lie within $G(y,x)$, whereas for $\tau_{\alpha}$, we have that $s\left( \tau_{\alpha} \right) = \alpha$ and $t\left( \tau_{\alpha} \right) = \alpha^{-1}$. Then we conclude that the orbits of $S_0$ are the subsets $G(x,\,y)\cup G(y,\, x)\subset G$.

Then, there is a larger symmetroid $S(G) \,\rightrightarrows\,G\,\rightrightarrows\,\Omega$, associated with $G$ which is given by all ``substitutions''
\begin{equation}
\xi^R_{\alpha_1}\,\colon\,\beta\,\Rightarrow \,\beta\circ\alpha^{-1}_1\,,
\end{equation}
whenever $t\left( \alpha_1 \right) = s\left( \beta \right)$, and
\begin{equation}
\xi^L_{\alpha_2}\,\colon\,\beta\,\Rightarrow \,\alpha_2\circ\beta\,,
\end{equation}
with $t\left( \beta \right) = s\left( \alpha_2 \right)$. Here
$$
\alpha_1\,\colon\,x\,\rightarrow \,y \quad \beta\,\colon\, x \,\rightarrow\,z \quad \alpha_2\,\colon\,z\,\rightarrow\,w
$$
are generic transitions in $G$. Exploiting the previous results about the canonical little $2$-groupoid $S_0(G)$, it is a routine computation to check the axioms of a $2$-groupoid for the groupoid generated by the family
\begin{equation}
\left\lbrace \xi^L_{\alpha},\, \xi^R_{\alpha}, \,\tau_{\alpha} \right\rbrace\,.
\end{equation}
The $2$-groupoid $S(G)\,\rightrightarrows\,G\,\rightrightarrows\,\Omega$ will be called the canonical symmetroid of the groupoid $G\,\rightrightarrows\,\Omega$ and corresponds to the action of the groupoid $G\,\rightrightarrows\,\Omega$ on itself (on the right and on the left) as described for instance in \cite[Chap. 4]{I-R-2019}. This structure generalizes what happens for groups: Given a group, indeed, there is a ``natural'' group acting upon itself, which is just the group itself acting on the right and on the left.

The $2$-groupoid $S_0\,\rightrightarrows\,G$ is a subgroupoid of $S$. We may also consider the $2$-groupoid $\mathfrak{T}\,\rightrightarrows\,G\,\rightrightarrows\,\Omega$, defined by the inversions $\tau_{\alpha}$ (notice that $\tau_{1_x}\,\colon\,1_x\,\Rightarrow\,1_x$) and the units $1_{\alpha}\,\colon\,\alpha\,\Rightarrow\,\alpha$. Such $2$-subgroupoid $\mathfrak{T}\subset S_0 \subset S$ is the ``microscopic reversibility'' symmetroid.

\section{Microscopic vs Global: The group of bisections.}

In order to introduce the notion of group of symmetries associated with a symmetroid, a preliminary structure needs to be defined. In this section it will be shown that, given a groupoid $G\,\rightrightarrows\,\Omega$, it is possible to build a group, whose elements will be called bisections, and which acts in a natural way on the space of objects $\Omega$.
Then, it will be proved that the original groupoid is isomorphic to a quotient of the action groupoid defined by the group of bisections acting on the space of objects.
This result tells us that one can adopt a global description of a groupoid via its group of bisections, or a local description via its transitions.
In particular, when the groupoid under consideration is the groupoid $S\rightrightarrows G$ which is the upper layer of the symmetroid $S\rightrightarrows G \rightrightarrows \Omega$, these points of view lead to the global and microscopic description of  symmetries mentioned in the introduction.


Let us start with a groupoid $G\,\rightrightarrows \,\Omega$. A bisection $b$ of $G\,\rightrightarrows \,\Omega$ is a subset $b\subset G$ such that the restrictions of both, the source and the target maps, are bijections (see Fig.\ref{fig:bisection_1}), i.e.,
\begin{equation*}
\mathrm{if}\: s_b= s\mid_b\,, \quad \mathrm{then} \quad s_b\,\colon\,b\,\rightarrow\,\Omega \quad \mathrm{is\: a\: bijective\: map},
\end{equation*}
as well as $t_b = t\mid_b$. This means that there exist maps $b_s\,\colon\,\Omega\,\rightarrow\,G$ and $b_t\,\colon\,\Omega\,\rightarrow\,G$, such that
\begin{equation}
b=\mathrm{range}\,b_s = \mathrm{range}\,b_t\,.
\end{equation}
The element $\alpha\in b \subset G$ such that $s(\alpha)=x$, i..e.,
$$
b_s\left( x \right) = \alpha\,,
$$
and $ t\left( \alpha \right)=y$, i.e.,
$$
b_t(y)=\alpha\,,
$$
will be denoted either as $b_s(x)=b_t(y)$, or simply $b(x)=b(y)$ if there is no risk of confusion. It is clear that any bisection $b$ defines a bijective map $\varphi_b\,\colon\,\Omega\,\rightarrow\,\Omega$ as follows
\begin{equation}
\varphi_b(x)= t(b_s(x))\quad \varphi_b^{-1}(y) = s(b_t(y))\quad \forall x,y\in \Omega\,.
\end{equation}

\begin{figure}[htp]
    \begin{center}
    \includegraphics[width=\textwidth]{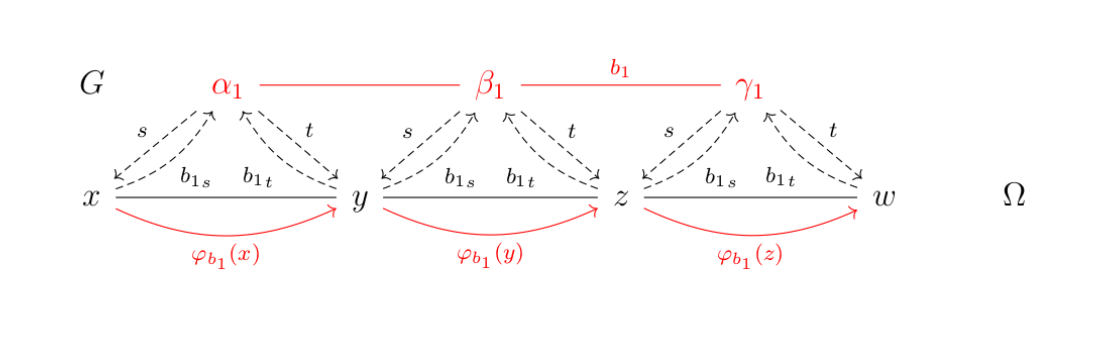}
    \caption{A schematic representation of a bisection: the red line connects the elements of the subset $b_1\subset G$ whilst the dotted arrows represent the map $b_s,b_t,s\mid_b,t\mid_b$. The red arrows denotes the bijective maps $\varphi_{b_1}$ associated with the bisection $b_1$.}
    \label{fig:bisection_1}
	\end{center}
\end{figure}

Even more interesting is the fact that bisections can be composed. Thus, given two bisections $b_1,\,b_2 \subset G$ we define its composition $b_2\circ b_1$ as the subset of $G$ obtained as the range of the maps $\left( b_2\circ b_1 \right)_s$, $\left( b_2\circ b_1 \right)_t$ defined as follows:
\begin{equation}\label{eq:composition_bisections}
\left( b_2\circ b_1 \right)_s (x) = (b_2)_s(y)\circ (b_1)_s(x)\,,
\end{equation}
and $y=t(b_1(s))=\varphi_{b_1}(x)$, or:
\begin{equation}
\left( b_2\circ b_1 \right)_t (z) = (b_2)_t(z)\circ (b_1)_t(y)\,,
\end{equation}
and $y=\varphi_{b_2}^{-1}(z)$. Then, the neutral element in this composition is given by the ``trivial'' bisection
\begin{equation}
b_e = \left\lbrace 1_x \mid x\in\Omega \right\rbrace\,,
\end{equation}
i.e., the subset of all units. Now, given a bisection $b$, its inverse $b^{-1}$ is given by the range of the maps
\begin{equation}
\left(b^{-1} \right)_s(x)=(b_t(x))^{-1}\,,\quad \left( b^{-1} \right)_t(y) = (b_s(y))^{-1}\,.
\end{equation}
Then, a straightforward computation shows that the set of all bisections, denoted in what follows by $\mathscr{G}$, satisfies all the axioms of a group.

We observe that the group $\mathscr{G}$ acts on $\Omega$, the action being the map $\mu \,\colon\,\mathscr{G}\times\Omega \,\rightarrow\,\Omega$,
\begin{equation}
\mu(b,x) = \varphi_b(x) =: b\cdot x\,.
\end{equation}
Indeed, we have that:
\begin{equation}
\begin{split}
\left( b_2\circ b_1 \right)(x) &= \varphi_{b_2\circ b_1}(x) = t\left( \left( b_2\circ b_1 \right)_s(x) \right) = t \left( \left( b_2 \right)_s(y)\circ \left(b_1\right)_s(x) \right) = \\
&= t\left( \left( b_2 \right)_s \left( \varphi_{b_1}(x) \right) \right) = \varphi_{b_2}\left( \varphi_{b_1} (x) \right) = b_2\cdot \left( b_1 \cdot x \right)\,.
\end{split}
\end{equation}
The following theorem will show us some additional relations between the groupoid $G\,\rightrightarrows \,\Omega$ and the action of the group $\mathscr{G}$ on $\Omega$.
\begin{theorem}\label{Th:reconstruction of groupoids}
Let $G\,\rightrightarrows \,\Omega$ be a connected groupoid and $\mathscr{G}$ the group of its bisections acting on $\Omega$. Then, $G\,\rightrightarrows\,\Omega$ is isomorphic, as groupoid, to the quotient of the action groupoid $\mathscr{G}\times\Omega\,\rightrightarrows\,\Omega$ by the normal subgroupoid $N = \left\lbrace \left( y;b;x \right)\mid \varphi_{b}(x)=x=y \right\rbrace$, where $\left( y;b;x \right)$, with $y=b\cdot x$, denotes an element of the groupoid $\mathscr{G}\times \Omega \,\rightrightarrows\,\Omega$.
\end{theorem}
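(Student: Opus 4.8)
The natural line of attack is the homomorphism theorem for groupoids: exhibit a surjective morphism of groupoids $\Phi$ from the action groupoid $\mathscr{G}\times\Omega\rightrightarrows\Omega$ onto $G\rightrightarrows\Omega$, covering the identity on $\Omega$, and show that $N$ is its kernel; the asserted isomorphism is then the map induced on the quotient. The candidate is
\[
\Phi\colon\mathscr{G}\times\Omega\longrightarrow G,\qquad \Phi(y;b;x):=b_s(x),
\]
the unique arrow of the bisection $b$ issuing from the object $x$. Since $s(b_s(x))=x$ and $t(b_s(x))=\varphi_b(x)=y$, we have $\Phi(y;b;x)\in G(y,x)$, so $\Phi$ covers $\mathrm{id}_\Omega$. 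That $\Phi$ is a homomorphism is exactly the composition rule \eqref{eq:composition_bisections}: for composable $(y;b_1;x)$ and $(z;b_2;y)$ we have $y=\varphi_{b_1}(x)$, whence
\[
\Phi\big((z;b_2;y)\circ(y;b_1;x)\big)=(b_2\circ b_1)_s(x)=(b_2)_s(y)\circ(b_1)_s(x)=\Phi(z;b_2;y)\circ\Phi(y;b_1;x),
\]
and $\Phi$ manifestly takes the trivial bisection $b_e$ to the units and $b^{-1}$ to the inverse arrow; so $\Phi$ is a morphism of groupoids.

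Next I would prove surjectivity of $\Phi$ by an ``extension lemma'': every $\alpha\colon x\to y$ in $G$ equals $b_s(x)$ for a suitable bisection $b$. For $x\ne y$ one can simply take $b:=\{1_z\mid z\in\Omega\setminus\{x,y\}\}\cup\{\alpha,\alpha^{-1}\}$ (and $b:=\{1_z\mid z\ne x\}\cup\{\alpha\}$ when $x=y$), which is immediately seen to be a bisection with $b_s(x)=\alpha$; more conceptually, connectedness of $G$ makes the action $\mu$ of $\mathscr{G}$ on $\Omega$ transitive, which is what guarantees that the quotient in the statement has the right ``size''. Then comes the identification of the kernel: $\Phi(y;b;x)$ is a unit iff $b_s(x)=1_x$, in which case necessarily $y=\varphi_b(x)=x$, so the set of such elements is the subgroupoid $N$ of the statement; a direct check shows $N$ is wide, totally intransitive and stable under conjugation, hence a normal subgroupoid, so that $(\mathscr{G}\times\Omega)/N$ is a well-defined groupoid over $\Omega$. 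Finally, for arrows sharing source $x$ and target $y$ one has $\Phi(y;b_1;x)=\Phi(y;b_2;x)$ iff $(y;b_1;x)\circ(y;b_2;x)^{-1}\in N$; thus the fibres of $\Phi$ are exactly the $N$-cosets, so $\Phi$ factors through a bijective morphism of groupoids $(\mathscr{G}\times\Omega)/N\to G$, which is the sought isomorphism.

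The delicate points are the identification of the kernel — matching the intrinsic description of $N$ in terms of $\varphi_b$ with the condition $b_s(x)=1_x$ that actually arises from $\Phi$ — and the verification of normality of $N$, which is what makes the quotient meaningful and the composition descend to it. By contrast, the extension lemma is routine; it is, however, the one place where the connectedness hypothesis genuinely enters, and it should be phrased so as to supply, for each arrow, enough bisections to make $\Phi$ onto.
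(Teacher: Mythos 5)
Your proposal is correct and follows essentially the same route as the paper: the same homomorphism $(y;b;x)\mapsto b_s(x)$ from the action groupoid onto $G$, the same identification of its kernel as $\{(x;b;x)\mid b_s(x)=1_x\}$ (which is what the paper's proof and its worked example actually use for $N$ --- note the condition $\varphi_b(x)=x$ in the literal statement is weaker than $b_s(x)=1_x$ when isotropy is nontrivial, so your reading agrees with the paper's proof rather than with the loose phrasing of the statement), and the first isomorphism theorem for groupoids to conclude. The only real divergence is the surjectivity lemma, where you exhibit a bisection through a given $\alpha\colon x\to y$ explicitly as $\{\alpha,\alpha^{-1}\}\cup\{1_z\mid z\neq x,\,y\}$, while the paper takes a section of the projection $\pi\colon G\to\Omega\times\Omega$ over the graph of a bijection with $\varphi(x)=y$; your construction is more elementary and, contrary to your closing remark, does not need connectedness at all, whereas the existence of the paper's section is precisely where that hypothesis is invoked.
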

\begin{proof}
It is convenient to denote the elements of the action groupoid $\mathscr{G} \times \Omega \rightrightarrows \Omega$ by triples $(y; b ; x)$, where $y = \varphi_b(x) = b \cdot x$, $x \in \Omega$, $b \in \mathscr{G}$. Then, the composition law of the action groupoid $\mathscr{G}\times\Omega \,\rightrightarrows\,\Omega$ is
\begin{equation}
\left( z;b_2;y \right)\circ\left( y;b_1;x \right) = \left( z; b_2\circ b_1 ; x \right)\,,
\end{equation}
whereas source and target maps are
\begin{equation}
s\left( y;b;x \right) = x\,,\quad t\left( y;b;x \right) = y\,.
\end{equation}
It follows that the units of the action groupoid are the elements:
\begin{equation}
\left( x; b_e; x \right)\,,
\end{equation}
where $b_e$ is the trivial bisection or neutral element of the group $\mathscr{G}$.

Define the map $A\,\colon\,\mathscr{G}\times\Omega\,\rightarrow\,G$
\begin{equation}
A\left( y;b;x \right) = b_s(x)= b_t(y)\,.
\end{equation}
Then, the map $A$ is a functor, i.e., a homomorphism of groupoids. Indeed,
\begin{equation}
\begin{split}
A\left( \left( z; b_2; y \right) \circ \left( y; b_1 ; x \right)\right) &= A\left( z; b_2\circ b_1 ; x \right) = \left( b_2\circ b_1 \right)_s(x) = \\
\left( b_2 \right)_s(y) \circ \left( b_1 \right)_s(x) &= A\left( z;b_2;y \right)\circ A\left( y;b_1;x \right)
\end{split}
\end{equation}
and
\begin{equation}
A(x;b_e;x) = 1_x\,.
\end{equation}
However, this groupoid homomorphism is not injective, since it has a non-trivial kernel:
\begin{equation}
N = \mathrm{Ker}A = \sqcup_{x\in \Omega}\mathrm{Ker}A(x)\,,\quad \mathrm{Ker}A(x)= \left\lbrace b\in\mathscr{G}\mid b_s(x)=1_x \right\rbrace\,,
\end{equation}
and $N$ is a normal subgroupoid of $\mathscr{G}\times\Omega\, \rightrightarrows\,\Omega$.

Let us prove, now, that $A$ is onto. It suffices to show that for any $\alpha \in G$, $\alpha\,\colon\,x\,\rightarrow\,y$, there is a bisection $b$ such that $b_s(x)=\alpha$. This can be seen considering the canonical projection
$$
\pi\,\colon\,G\,\rightarrow\,G(\Omega)=\Omega\times\Omega,\quad \quad \pi(\alpha)\,=\, \left( t(\alpha), \,s(\alpha) \right).
$$
Let us consider in $\Omega\times\Omega$ the graph of a bijective map $\varphi \,\colon\,\Omega\,\rightarrow\,\Omega$ such that $\varphi(x)=y$. Then, consider a section of $\pi$ restricted to $\pi^{-1}\left( \mathrm{graph}\,\varphi \right)$, i.e., a map
$$
\sigma\,\colon\,\mathrm{graph}\,\varphi\subset \Omega\times\Omega\,\rightarrow\,G \; \mathrm{such\: that} \;\;\pi\circ\sigma = \mathrm{id}\,.
$$
Therefore, $b= \sigma\left( \mathrm{graph}\,\varphi \right)$ is the bisection we were looking for.

Since $N$ is a normal subgroupoid, because of the first isomorphism theorem of groupoids (see, for instance \cite[Chap. 5]{I-R-2019}), we get the short exact sequence of groupoids and homomorphisms:
\begin{equation}
1\,\rightarrow\,N\,\rightarrow\,\mathscr{G}\times\Omega\,\rightarrow\,A\left( \mathscr{G}\times\Omega \right) \,\rightarrow\,1\,,
\end{equation}
with $A\left( \mathscr{G}\times\Omega \right) = \mathscr{G}\times\Omega / N$. Since $A$ is a surjective groupoid homomorphism, $A\left( \mathscr{G}\times\Omega \right)$ is isomorphic to $G\,\rightrightarrows\,\Omega$.
\end{proof}
An immediate consequence is the following corollary:
\begin{corollary}
Any connected algebraic groupoid $G\,\rightrightarrows\,\Omega$ is isomorphic to the quotient of an action groupoid.
\end{corollary}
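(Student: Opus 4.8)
The plan is to obtain the corollary as an immediate specialization of Theorem~\ref{Th:reconstruction of groupoids}. Starting from a connected algebraic groupoid $G\rightrightarrows\Omega$, the first step is to produce an action groupoid out of it. For this I would take the group $\mathscr{G}$ of bisections of $G$, whose construction is entirely set-theoretic and therefore makes sense for an arbitrary algebraic groupoid, and let it act on the space of objects $\Omega$ via $\mu(b,x)=\varphi_b(x)=b\cdot x$, as already established above. This yields the action groupoid $\mathscr{G}\times\Omega\rightrightarrows\Omega$, with the composition law, source and target recalled at the beginning of the proof of the theorem.

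The second step is simply to invoke Theorem~\ref{Th:reconstruction of groupoids}, which asserts that $G\rightrightarrows\Omega$ is isomorphic, as a groupoid, to the quotient $(\mathscr{G}\times\Omega)/N$ of this particular action groupoid by the normal subgroupoid $N=\mathrm{Ker}\,A$. Since $(\mathscr{G}\times\Omega)/N$ is by definition a quotient of an action groupoid, this already exhibits $G$ as the quotient of an action groupoid, which is precisely the content of the corollary. The hypothesis of connectedness enters exactly where it does in the theorem: it is what guarantees the surjectivity of $A$, through the existence of a section of the canonical projection $\pi\colon G\to\Omega\times\Omega$ over the graph of a bijection of $\Omega$.

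I do not expect any genuine obstacle in this argument: the corollary is a restatement of the theorem at the level of isomorphism classes, and all the real work — the group structure on $\mathscr{G}$, the functoriality and surjectivity of $A$, the identification of $N$ as a normal subgroupoid, and the first isomorphism theorem for groupoids — has already been carried out in the proof of Theorem~\ref{Th:reconstruction of groupoids}. The only point worth emphasizing is that every construction involved is purely algebraic, so the statement indeed holds in the category of algebraic groupoids with no hypotheses beyond connectedness.
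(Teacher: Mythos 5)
Your proposal is correct and follows essentially the same route as the paper: the corollary is stated there as an immediate consequence of Theorem~\ref{Th:reconstruction of groupoids}, which already exhibits $G\rightrightarrows\Omega$ as isomorphic to the quotient of the action groupoid $\mathscr{G}\times\Omega\rightrightarrows\Omega$ by the normal subgroupoid $N=\mathrm{Ker}\,A$. Your additional remarks on where connectedness is used and on the purely algebraic nature of the constructions are accurate and consistent with the paper's discussion.
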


A couple of remarks are in order now. Firstly, the previous structure theorem can fail in the category of Lie groupoids\footnote{In a subsequent work it will be shown that the theorem is still true in the category of measure groupoids considering an inessential restriction on the given groupoid.} because of the lack of smooth bisections with the desired properties. Second, the quotient of an action groupoid by a normal subgroupoid does not have to be an action groupoid anymore, i.e., not all groupoids are action groupoids (examples are provided easily using ``tiling'' groupoids (see \cite{I-R-2019, wein-1996})). Finally, if a groupoid is not connected, we can use the theorem, in the algebraic setting, on each one of its connected components.

In summary, given a symmetry of a quantum system described as the groupoid $G\,\rightrightarrows\,\Omega$, i.e., a $2$-groupoid $S\,\rightrightarrows\,G\,\rightrightarrows\,\Omega$, then such ``microscopic'' description of the symmetry can be seen as a global symmetry described by the group $\mathscr{S}$ of bisections of $S\,\rightrightarrows\,G$ acting on $G$ (it is important to recall here that $S\,\rightrightarrows\,G$ is a groupoid only with respect to the vertical composition and this is the composition law that will be used to define the product of two bisections). Thus, the symmetroid $S\,\rightrightarrows\,G$ will be recovered as a quotient of the action groupoid $\mathscr{S}\times G\,\rightrightarrows\,G$. We will analyze these issues in the next section.

\begin{example}{\textbf{Reconstruction of finite groupoids.}}\label{ex.reconstruction}
A simple example is now useful to illustrate the content of the theorem above. Let us consider the cyclic groupoid of order 2 over 2 elements $C_2(4) \,\rightrightarrows\, \Omega_2$ (see \cite{I-R-2019}). The space of objects consists of two elements, $\Omega_2 = \left\lbrace +, - \right\rbrace$ whereas the groupoid $C_2(4)$ contains 8 transitions:
\begin{equation}
\begin{split}
\alpha_1 = (+, \sigma, -) \,,\quad \alpha_2 = (+, e, -) \,, \quad  1_{\pm} = (\pm , e, \pm) \\
\beta_1 = (-, \sigma, +) \,,\quad \beta_2 = (-, e, +) \,,\quad \sigma_{\pm} = (\pm , \sigma, \pm)\,,
\end{split}
\end{equation}
where $\left\lbrace e, \sigma \right\rbrace$ are the elements of the group of permutations of 2 objects. In other words, the transition $\alpha_1$ can be interpreted as a transition between the two objects of the groupoid which contemporarily produces a permutation in an internal register, such that two consecutive permutations bring the register to the original configuration.

The group of bisections, $\mathcal{G}$, is made up of subsets of the groupoid $C_2(4)$ such that the restrictions of the source and target maps to these subsets are bijective. In this case, therefore, any bisection contains two transitions and we have 8 bisections:
\begin{equation}
\begin{split}
b_e = (1_+, 1_-)\quad b_+ = (\sigma_+, 1_-)\quad  b_- = (1_+, \sigma_-) \quad b_g = (\sigma_+ , \sigma_-)\\
b_1 = (\alpha_1, \beta_1)\quad b_2 = (\alpha_1, \beta_2) \quad b_3= (\alpha_2, \beta_1) \quad b_4= (\alpha_2, \beta_2)
\end{split}
\end{equation}
which define the multiplication Tab.\eqref{table_1}. A schematic representation of the set of bisections is contained in Fig.\ref{fig:bisection_2}.
\begin{table}[h!]
\centering
\begin{tabular}{|c|c c c c|c c c c|}
\hline
0 & $ b_e $ & $ b_+ $ & $b_-$ & $b_g$ & $b_1$ & $b_2$ & $b_3$ & $b_4$ \\
\hline
$b_e$ & $b_e$ & $b_+$ & $b_-$ & $b_g$ & $b_1$ & $b_2$ & $b_3$ & $b_4$ \\
$b_+$ & $b_+$ & $b_e$ & $b_g$ & $b_-$ & $b_3$ & $b_4$ & $b_1$ & $b_2$ \\
$b_-$ & $b_-$ & $b_g$ & $b_e$ & $b_+$ & $b_2$ & $b_1$ & $b_4$ & $b_3$ \\
$b_g$ & $b_g$ & $b_-$ & $b_+$ & $b_e$ & $b_4$ & $b_3$ & $b_2$ & $b_1$ \\
\hline
$b_1$ & $b_1$ & $b_2$ & $b_3$ & $b_4$ & $b_e$ & $b_+$ & $b_-$ & $b_g$ \\
$b_2$ & $b_2$ & $b_1$ & $b_4$ & $b_3$ & $b_-$ & $b_g$ & $b_e$ & $b_+$ \\
$b_3$ & $b_3$ & $b_4$ & $b_1$ & $b_2$ & $b_+$ & $b_e$ & $b_g$ & $b_-$ \\
$b_4$ & $b_4$ & $b_3$ & $b_2$ & $b_1$ & $b_g$ & $b_-$ & $b_+$ & $b_e$ \\
\hline
\end{tabular}
\caption{Multiplication table of the group $\mathscr{G}$ of bisections of the groupoid $C_2(4)\,\rightrightarrows \,\Omega_2$.}
\label{table_1}
\end{table}
It is immediate to notice that there is a normal subgroup
\begin{equation}
\mathscr{G}_0 = \left\lbrace b_e, \,b_+,\,b_-,\,b_g \right\rbrace
\end{equation}
which is isomorphic to the group $\mathscr{G}_0 \simeq \mathbb{Z}_2\times \mathbb{Z}_2$ (this group is also known as the Klein four-group). The quotient group $\mathscr{H} = \mathscr{G} / \mathscr{G}_0$ is isomorphic to the group $\mathbb{Z}_2$, the group of permutations of two elements. As a side remark, let us notice that the group $\mathbb{Z}_2$ is the group of bijective maps of the space of objects $\Omega_2$.
\begin{figure}[htp]
    \begin{center}
    \includegraphics[width=\textwidth]{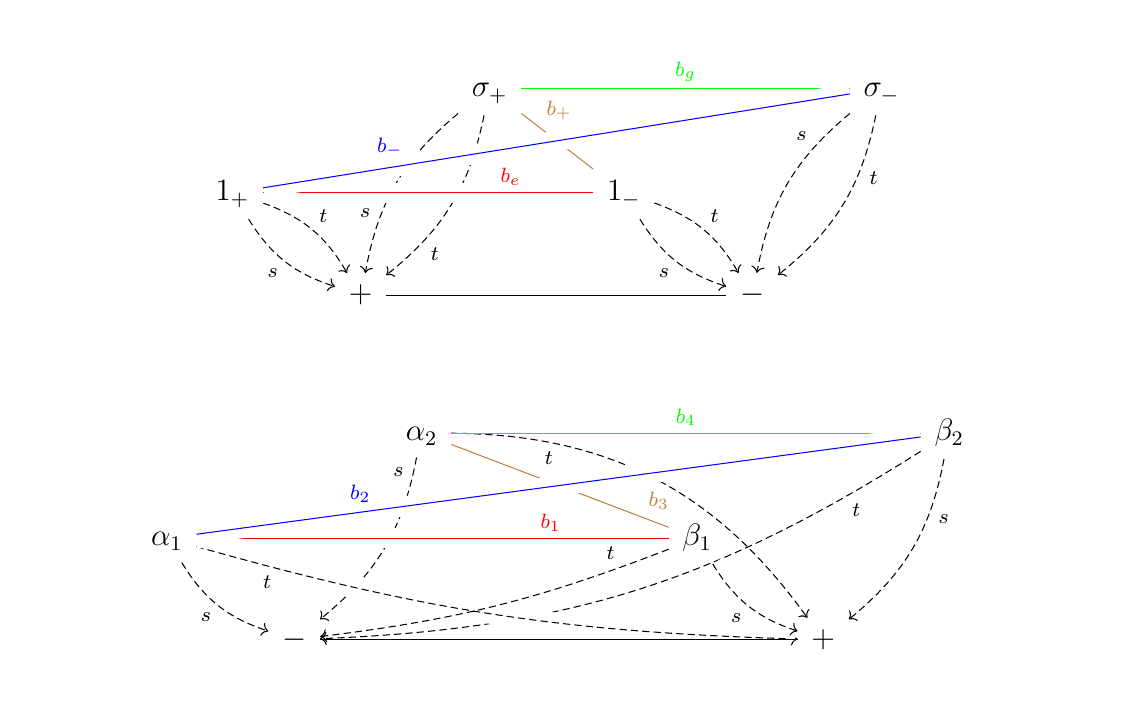}
    \caption{Schematic representation of the group of bisections $\mathscr{G}$ of the groupoid $C_2(4)$.}
    \label{fig:bisection_2}
	\end{center}
\end{figure}

A section of the projection $\pi \,\colon\,\mathscr{G}\,\rightarrow\,\mathscr{H}$ is given, for instance, by the map $\rho\,\colon\,\mathscr{H}\,\rightarrow\,\mathscr{G}$
\begin{equation}
\rho(e) = b_e\quad \rho(\sigma) = b_4\,,
\end{equation}
which is also a homomorphism of groups. This section defines a map between the group $\mathscr{H}$ and the group of automorphisms of $\mathscr{G}_0$, i.e., $W_{\cdot}^{\rho}\,\colon\,\mathscr{H}\,\rightarrow\, \mathrm{Aut}(\mathscr{G}_0)$
\begin{equation}
W_h^{\rho}(k) = \rho(h)k \left( \rho(h)\right)^{-1}\,.
\end{equation}
In the case considered in this example, we have:
\begin{equation}
\begin{split}
W^{\rho}_e &= id_K \\
W^{\rho}_{\sigma}(b_+) = b_-,\quad W^{\rho}_{\sigma}(b_-) = b_+&,\quad W^{\rho}_{\sigma}(b_e) = b_e,\quad W^{\rho}_{\sigma}(b_g) = b_g\,.
\end{split}
\end{equation}
Therefore, the group of bisections is an extension of the group $\mathscr{H}$ by the group $\mathscr{G}_0$ isomorphic to the semidirect product $\mathscr{G}_0\rtimes_{W^{\rho}} \mathscr{H}$.

According to Theorem \ref{Th:reconstruction of groupoids} the groupoid $C_2(4)$ is isomorphic to the quotient groupoid of the action groupoid $\mathscr{G}\times \Omega_2$ by a normal subgroupoid $N$. Concerning the situation presented in this example, the normal subgroupoid $N$ contains the following transitions of $\mathscr{G}\times \Omega_2$
\begin{equation}
N=\left\lbrace (+,b_-,+), (+,b_e,+), (-,b_+,-), (-,b_e,-) \right\rbrace\,.
\end{equation}
Then, the quotient space $\mathscr{G}\times \Omega_2  / N$ contains the following elements:
\begin{eqnarray*}
& \left[ +, b_e , + \right] = \left\lbrace (+,b_-,+), (+,b_e,+) \right\rbrace,\;\; \left[ +, b_+ , + \right] = \left\lbrace (+,b_+,+), (+,b_g,+) \right\rbrace \\
&\left[ -, b_e , - \right] = \left\lbrace (-,b_+,-), (-,b_e,-) \right\rbrace,\;\; \left[ -, b_- , - \right] = \left\lbrace (-,b_-,-), (-,b_g,-) \right\rbrace \\
&\left[ -, b_1 , + \right] = \left\lbrace (-,b_1,+), (-,b_3,+) \right\rbrace,\;\;\left[ -, b_4 , + \right] = \left\lbrace (-,b_4,+), (-,b_2,+) \right\rbrace \\
&\left[ +, b_1 , - \right] = \left\lbrace (+,b_1,-), (+,b_3,-) \right\rbrace,\;\;\left[ +, b_4 , - \right] = \left\lbrace (+,b_4,-), (+,b_2,-) \right\rbrace\,.
\end{eqnarray*}
A straightforward computation shows that this groupoid is isomorphic to the groupoid $C_2(4)$, the isomorphism being as follows:
\begin{equation}
\begin{split}
\left[ +, b_e , + \right] \leftrightarrow 1_+ ,\;\; \left[ +, b_+ , + \right] \leftrightarrow \sigma_+ ,\;\; \left[ -, b_e , - \right] \leftrightarrow 1_- ,\;\; \left[ -, b_- , - \right] \leftrightarrow \sigma_- \\
\left[ -, b_1 , + \right] \leftrightarrow \beta_1 ,\;\; \left[ -, b_4 , + \right] \leftrightarrow \beta_2 ,\;\; \left[ +, b_1 , - \right] \leftrightarrow \alpha_1 ,\;\; \left[ +, b_4 , - \right] \leftrightarrow \alpha_2\,.
\end{split}
\end{equation}
\end{example}
\vspace{0.5cm}
The above example is illustrating also a more general result about the structure of the group of bisections of connected discrete groupoids. Indeed, let us consider a discrete connected groupoid $G\rightrightarrows \Omega$ and its fundamental sequence (see \cite{I-R-2019}):
\begin{equation}
\mathbf{1}_{\Omega} \, \rightarrow \, G_0\, \rightarrow \,G\, \rightarrow \, G(\Omega)\,\rightarrow \, \mathbf{1}_{\Omega}\,,
\end{equation}
where $G_0 = \sqcup_{x\in \Omega} G_x$ is the isotropy normal subgroupoid of $G$ and $G(\Omega)$ is the pair groupoid over $\Omega$. This sequence of groupoid homomorphisms splits and we have
\begin{equation*}
G \cong G(\Omega) \times \Gamma\,,
\end{equation*}
where $\Gamma$ is a group isomorphic to any of the isotropy groups of $G$. Therefore, one can look for the group of bisections, say $\mathscr{G}$, of the direct product $G(\Omega) \times \Gamma$. For such a groupoid, a bisection is expressed via the following map:
\begin{equation}
\begin{split}
b_s \,\colon\, \Omega \, \rightarrow \, G(\Omega)\times \Gamma \\
b_s(x) = \left( (\psi(x), x) , \gamma(x)  \right)\,,
\end{split}
\end{equation}
where $g(x)\in \Gamma$ for any $x\in \Omega$, and
\begin{equation}
\psi\,\colon\,\Omega\,\rightarrow\,\Omega
\end{equation}
is a bijective map on the space of objects. The product of two bisections, $b_1,b_2\in\mathscr{G}$, is expressed as follows:
\begin{equation}
\begin{split}
(b_2\cdot b_1)_s(x) = (b_2)_s(\varphi_{b_1}(x))&\circ(b_1)_s(x) = \\
= \left((\psi_2(\psi_1(x)), \psi_1(x)),\,\gamma_2(\psi_1(x))\right) &\circ \left( (\psi_1(x), x),\, \gamma_1(x)  \right) = \\
\left( (\psi_2\circ\psi_1(x), x),\,\gamma_2 (\psi_1(x))\gamma_1(x)  \right)
\end{split}
\end{equation}
It is easy to see that the bisections of the type:
\begin{equation}
\gamma_s (x) = \left((x,x), \,\gamma(x)  \right)
\end{equation}
form a normal subgroup, say $\mathscr{G}_0$, from which one obtains the following short exact sequence of groups:
\begin{equation}
1 \, \rightarrow \, \mathscr{G}_0\,\rightarrow\,\mathscr{G}\,\rightarrow\,\mathscr{H} := \mathscr{G}/\mathscr{G}_0 \,\rightarrow\, 1 \,.
\end{equation}
The quotient group is isomorphic to the group of bijective maps of the space of objects and the sequence splits, for instance, via the section
\begin{equation}
\begin{split}
\rho\,\colon\,\mathscr{H}\,&\rightarrow\,\mathscr{G}\\
\rho(\psi) \, \rightarrow \,& \left( (\psi(\cdot), \cdot),\, e \right)\,.
\end{split}
\end{equation}
Therefore, the group of bisections is isomorphic to the semidirect product
\begin{equation}
\mathscr{G}\cong \mathscr{G}_0 \rtimes_{W^{\rho}} \mathscr{H}
\end{equation}
where $W^{\rho}$ is the following map:
\begin{equation}
\begin{split}
W^{\rho}_{\cdot}\,\colon\,\mathscr{H}\,&\rightarrow\,\mathrm{Aut}(\mathscr{G}_0)\\
\psi \in \mathscr{H} \,\rightarrow\, W_{\psi}^{\rho}(((x,x)&,\,\gamma(x))) = ((x,x),\gamma(\psi(x)))
\end{split}
\end{equation}

\section{The canonical group of symmetries of a quantum system.}

Given a quantum system $G\,\rightrightarrows\,\Omega$ with a symmetroid $S\,\rightrightarrows\,G\,\rightrightarrows\,\Omega$, we may consider the group of bisections $\mathscr{S}$ of the groupoid $S \rightrightarrows G$ which acts on $G$, the group composition law  given by, (\ref{eq:composition_bisections}):
\begin{equation*}
(b_2\cdot b_1)_s(\alpha) = (b_2)_s(\varphi_{b_1}(\alpha))\circ_V (b_1)_s(\alpha)\, , \qquad b_1,\,b_2\in \mathscr{S} \, .
\end{equation*}
However there is an issue concerning this definition, that is, the bisections in $\mathscr{S}$ do not necessarily respect the groupoid structure $G\,\rightrightarrows\,\Omega$.

Consider $\alpha,\,\alpha'\in G$ two composable transitions, i.e. $t(\alpha)=s(\alpha')$, and $b \in \mathscr{S}$ a bisection such that,
\begin{equation}
\left( b \right)_s(\alpha)\,\colon\,\alpha\,\Rightarrow\,\beta\,,\quad \left( b \right)_s(\alpha')\,\colon\,\alpha'\,\Rightarrow\,\beta'\,,
\end{equation}
and $\beta,\,\beta'$ are composable, too. Then, one may define the horizontal composition:
\begin{equation}
b_s(\alpha')\circ_H b_s(\alpha)\,\colon\, \alpha'\,\circ\,\alpha \, \Rightarrow \, \beta' \circ \beta\,,
\end{equation}
which is, generally, different from $b_s\left( \alpha' \circ \alpha \right)$.
Since the source map $s\,\colon\, S \,\rightarrow\,G$ is a horizontal homomorphism (or anti-homomorphism), i.e.,
$$
s(\xi' \circ_H \xi) = s(\xi')\circ s(\xi)\,,
$$
applying $s$ to $b_s(\alpha')\circ_H b_s(\alpha)$ and $b_s(\alpha' \circ\alpha)$ we get in both cases $\alpha' \circ \alpha$. Hence, it follows that
\begin{equation}
b_s\left( \alpha' \right)\circ_H b_s\left( \alpha \right) = b_s(\alpha'\circ \alpha)\circ_V \gamma \left( \alpha',\,\alpha \right)\,,
\end{equation}
where $\gamma\left( \alpha',\,\alpha \right) \in S_{\alpha'\circ\alpha}$ and $S_{\alpha'\circ\alpha}$ denotes the isotropy group of the transition $\alpha'\circ\alpha$ in $S$. Analogously one gets
\begin{equation}
b_s\left( \alpha' \right)\circ_H b_s\left( \alpha \right) = \tilde{\gamma} \left( \alpha',\,\alpha \right)\circ_V b_s(\alpha'\circ \alpha)\,,
\end{equation}
with $\tilde{\gamma}\left( \alpha',\,\alpha \right) \in S_{\beta' \circ \beta}$. These elements $\gamma(\alpha',\,\alpha)$ will define a $2$-cocycle on $G$ with values in the group bundle over $G$ determined by the isotropy groups $S_{\alpha'\circ\alpha}$, and, in certain cases, could be computed explicitly (for more details on groupoid cohomology see, for instance, \cite{I-R-2019,Renault-1980, lie-groupoids}). In what follows, we will restrict our attention to those bisections that preserve the composition of quantum transitions, that is, those bisections for which the cocycle $\gamma$ discussed before is trivial.
\begin{definition}\label{flat_bisection}
Given a symmetroid $S\,\rightrightarrows\,G\,\rightrightarrows\,\Omega$, a bisection $b\subset S$ such that
\begin{equation}
b_s\left( \alpha' \right)\circ_H b_s\left( \alpha \right) = b_s\left( \alpha'\circ\alpha \right) \left( \mathrm{or}\; b_s\left( \alpha\circ \alpha' \right)\,\right)\,,
\end{equation}
for every pair of composable transitions $\alpha,\,\alpha'\in G$, will be called flat or horizontal.
\end{definition}

It is clear that the set of flat bisections determines a subgroup $\mathscr{S}^{\flat}\subset \mathscr{S}$.
\begin{proposition}
Let $S\,\rightrightarrows\,G\,\rightrightarrows\,\Omega$ be a $2$-groupoid, then the subset of flat bisections $b\in \mathscr{S}$ defines a subgroup of $\mathscr{S}$, denoted $\mathscr{S}^{\flat}$. Moreover the action of $\mathscr{S}^{\flat}$ on $G$ is by (anti-)automorphisms of $G$.
\end{proposition}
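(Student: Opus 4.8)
The plan is to verify the group axioms for $\mathscr{S}^{\flat}$ using the exchange identity \eqref{Eq: compatibility}, and then to obtain the last assertion by transporting the flatness condition of Definition~\ref{flat_bisection} down to $G$ along the source and target maps $s,t\colon S\to G$, which are horizontal (anti-)homomorphisms. As a preliminary I would record two standard $2$-categorical identities, both immediate from \eqref{Eq: compatibility} upon specialising the vertical factors to units, respectively to vertical inverses: the interchange of units $1_{\alpha'}\circ_H 1_{\alpha}=1_{\alpha'\circ\alpha}$, and the interchange of vertical inverses $(\xi'\circ_H\xi)^{-1}=(\xi')^{-1}\circ_H(\xi)^{-1}$ (where $(-)^{-1}$ is the vertical inverse in $S$), both valid whenever the horizontal composites involved are defined. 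The first of these says exactly that the trivial bisection $b_e=\{1_{\alpha}\mid\alpha\in G\}$, i.e. the neutral element of $\mathscr{S}$, is flat, so $b_e\in\mathscr{S}^{\flat}$.

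Next I would prove, for a flat bisection $b$, that the induced bijection $\varphi_b\colon G\to G$, $\varphi_b(\alpha)=t(b_s(\alpha))$, is an (anti-)automorphism of the groupoid $G\rightrightarrows\Omega$; this gives the final sentence of the proposition and is also needed for the closure of $\mathscr{S}^{\flat}$ under products. Applying $t$ to $b_s(\alpha')\circ_H b_s(\alpha)=b_s(\alpha'\circ\alpha)$ and using that $t$ is a horizontal (anti-)homomorphism yields $\varphi_b(\alpha'\circ\alpha)=\varphi_b(\alpha')\circ\varphi_b(\alpha)$ (respectively $\varphi_b(\alpha)\circ\varphi_b(\alpha')$ in the anti-homomorphism case) for all composable $\alpha,\alpha'\in G$. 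Since $\varphi_b$ is already a bijection of the set $G$ (being the map associated with a bisection of $S\rightrightarrows G$), and a bijective (anti-)homomorphism of a groupoid automatically sends units to units and inverses to inverses, $\varphi_b$ is an (anti-)automorphism of $G\rightrightarrows\Omega$.

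Closure of $\mathscr{S}^{\flat}$ under products then follows from one application of the exchange identity: for $b_1,b_2\in\mathscr{S}^{\flat}$ and composable $\alpha,\alpha'\in G$,
\begin{align*}
(b_2\cdot b_1)_s(\alpha')\circ_H(b_2\cdot b_1)_s(\alpha)
&=\big[(b_2)_s(\varphi_{b_1}(\alpha'))\circ_V(b_1)_s(\alpha')\big]\circ_H\big[(b_2)_s(\varphi_{b_1}(\alpha))\circ_V(b_1)_s(\alpha)\big]\\
&=\big[(b_2)_s(\varphi_{b_1}(\alpha'))\circ_H(b_2)_s(\varphi_{b_1}(\alpha))\big]\circ_V\big[(b_1)_s(\alpha')\circ_H(b_1)_s(\alpha)\big]\\
&=(b_2)_s\big(\varphi_{b_1}(\alpha'\circ\alpha)\big)\circ_V(b_1)_s(\alpha'\circ\alpha)=(b_2\cdot b_1)_s(\alpha'\circ\alpha),
\end{align*}
where the middle equality is \eqref{Eq: compatibility} and the third uses flatness of $b_1,b_2$ and the homomorphism property of $\varphi_{b_1}$ just established. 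For inverses, using $(b^{-1})_s(\alpha)=(b_t(\alpha))^{-1}$ and $b_t(\eta)=b_s(\varphi_b^{-1}(\eta))$, flatness of $b$ evaluated at $\varphi_b^{-1}(\alpha'),\varphi_b^{-1}(\alpha)$ gives $b_t(\alpha')\circ_H b_t(\alpha)=b_t(\alpha'\circ\alpha)$, and then the interchange of vertical inverses recorded above yields $(b^{-1})_s(\alpha')\circ_H(b^{-1})_s(\alpha)=\big(b_t(\alpha')\circ_H b_t(\alpha)\big)^{-1}=(b^{-1})_s(\alpha'\circ\alpha)$, so $b^{-1}\in\mathscr{S}^{\flat}$.

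The main obstacle is not any single computation but the consistent bookkeeping of the homomorphism/anti-homomorphism dichotomy — which of $s,t$ reverses the order of $\circ$, which depends on whether the natural inversions $\tau_{\alpha}$ are present (the relaxed notion of $2$-groupoid in which $s,t$ may be antihomomorphisms) — and how this interacts with the two forms $b_s(\alpha'\circ\alpha)$, $b_s(\alpha\circ\alpha')$ admitted in Definition~\ref{flat_bisection}; one must also check the routine point that composability of $\alpha,\alpha'$ in $G$ forces composability, in the appropriate order, of $\varphi_b(\alpha),\varphi_b(\alpha')$, so that every horizontal composition appearing above is genuinely defined.
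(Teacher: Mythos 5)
Your proposal is correct and follows essentially the same route as the paper's own proof: applying the target map (a horizontal (anti-)homomorphism) to the flatness relation to show that $\varphi_b$ is an (anti-)automorphism, and using the exchange identity to establish closure of $\mathscr{S}^{\flat}$ under the product of bisections. Your additional verifications that the trivial bisection is flat and that $\mathscr{S}^{\flat}$ is closed under inverses are points the paper leaves implicit, and they are handled correctly via the interchange of units and of vertical inverses.
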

\begin{proof}
Let $b\in \mathscr{S}^{\flat}$, then the associated map $\varphi_b\,\colon\,G\,\rightarrow\,G$ is an (anti-)automorphism of $G$:
\begin{equation}
\begin{split}
\varphi_b\left( \beta \circ \alpha \right) = t\left( b_s\left( \beta\circ \alpha \right) \right) &= t\left( b_s\left( \beta \right) \circ_H b_s\left( \alpha \right)  \right) = \\
= t\left( b_s\left( \beta \right) \right) \circ t\left( b_s\left( \alpha \right) \right) &= \varphi_b\left( \beta \right) \circ \varphi_b\left( \alpha \right)\,.
\end{split}
\end{equation}
Hence $\mathscr{S}^{\flat}$ acts on $G$ by (anti-)automorphisms. We must check now that $\mathscr{S}^{\flat} \subset \mathscr{S}$ is a subgroup.
\begin{equation*}
\begin{split}
&\left( b_2\cdot b_1 \right)_s\left( \beta \circ \alpha \right) = \left( b_2 \right)_s\left( t\left( \left( b_1 \right)_s \left( \beta \circ \alpha \right) \right) \right) \circ_V \left( b_1 \right)_s \left( \beta\circ \alpha \right)  = \\
&=\left( b_2 \right)_s \left( \beta' \circ \alpha' \right) \circ_V \left( b_1 \right)_s \left( \beta \circ \alpha \right)= \\
&=\left( \left( b_2 \right)_s \left( \beta' \right) \circ_H \left( b_2 \right)_s \left( \alpha' \right) \right) \circ_V \left( \left( b_1 \right)_s \left( \beta \right) \circ_H \left( b_1 \right)_s \left( \alpha \right) \right) = \\
&=\left( \left( b_2 \right)_s \left( \beta' \right) \circ_V \left( b_1 \right)_s \left( \beta \right) \right) \circ_H \left( \left( b_2 \right)_s \left( \alpha' \right) \circ_V \left( b_1 \right)_s \left( \alpha \right) \right)= \\
&=\left( b_2\cdot b_1 \right)_s \left( \beta \right) \circ_H \left( b_2\cdot b_1 \right)_s \left( \alpha \right)\,,
\end{split}
\end{equation*}
where in the last identity we have used the exchange identity of the symmetroid $S\,\rightrightarrows\,G\,\rightrightarrows\,\Omega$. In a diagramatic way the previous identity can be proven according to the following steps. Firstly
\begin{equation*}
\begin{tikzcd}[column sep=small, row sep=small]
 & \scriptstyle \beta \arrow[d, "b_1(\beta)", Rightarrow] \arrow[dr, bend left = 30] & & \scriptstyle \alpha \arrow[dr, bend left = 30] \arrow[d, "b_1(\alpha)", Rightarrow] &  \\
\scriptstyle \bullet \arrow[r, dash] \arrow[ur, dash, bend left = 30] \arrow[dr, dash, bend right = 30] & \scriptstyle \beta' \arrow[d, "b_2(\beta')", Rightarrow] \arrow[r] & \scriptstyle \bullet \arrow[ur, dash, bend left = 30] \arrow[r, dash] \arrow[dr, dash, bend right = 30] & \scriptstyle \alpha' \arrow[d, "b_2(\alpha')", Rightarrow] \arrow[r] & \scriptstyle \bullet \\
& \scriptstyle \beta'' \arrow[ur, bend right = 30] & & \scriptstyle \alpha'' \arrow[ur, bend right = 30]
\end{tikzcd} \simeq  \begin{tikzcd} [column sep=large, row sep=scriptsize]
 & \beta \circ \alpha \arrow[d, "b_1(\beta) \circ_H b_1(\alpha)", Rightarrow] \arrow[dr, bend left = 30] & \\
\scriptstyle \bullet \arrow[r, dash] \arrow[ur, dash, bend left = 25] \arrow[dr, dash, bend right = 25] & \scriptstyle \beta' \circ \alpha' \arrow[d, "b_1(\beta') \circ_H b_1(\alpha')", Rightarrow] \arrow[r] & \scriptstyle \bullet \\
& \beta'' \circ \alpha'' \arrow[ur, bend right = 30] &
\end{tikzcd} \simeq
\end{equation*}
\begin{equation*}
\simeq \begin{tikzcd}[column sep=scriptsize, row sep=small]
& \scriptstyle \beta \circ \alpha \arrow[d, "b_1(\beta\circ \alpha)", Rightarrow] \arrow[dr, bend left = 30] & \\
\scriptstyle \bullet \arrow[r, dash] \arrow[ur, dash, bend left = 30] \arrow[dr, dash, bend right = 30] & \scriptstyle \beta' \circ \alpha' \arrow[d, "b_2(\beta' \circ \alpha')", Rightarrow] \arrow[r] & \scriptstyle \bullet \\
& \scriptstyle \beta'' \circ \alpha'' \arrow[ur, bend right = 30] &
\end{tikzcd} \simeq  \begin{tikzcd}[column sep=scriptsize, row sep=small]
 & \scriptstyle \beta \circ \alpha \arrow[dd, "b_2 \circ_V b_1" near start, " (\beta \circ \alpha)" near end, Rightarrow] \arrow[dr, bend left = 30] & \\
\scriptstyle \bullet \arrow[ur, dash, bend left = 30] \arrow[dr, dash, bend right = 30] &  & \scriptstyle \bullet \\
& \scriptstyle \beta'' \circ \alpha'' \arrow[ur, bend right = 30] &
\end{tikzcd}
\end{equation*}
The first diagram of the above chain of equalities is equal to
\begin{equation}
\begin{tikzcd}
 & \beta \arrow[dd, "b_2 \circ_V b_1(\beta)", Rightarrow] \arrow[dr, bend left = 30] &  & \alpha \arrow[dr, bend left = 30] \arrow[dd, "b_2 \circ_V b_1(\alpha)", Rightarrow] & \\
\bullet \arrow[ur, dash, bend left = 30] \arrow[dr, dash, bend right = 30] &  & \bullet \arrow[ur, dash, bend left = 30] \arrow[dr, dash, bend right = 30] & & \bullet  \\
& \beta'' \arrow[ur, bend right = 30] & & \alpha'' \arrow[ur, bend right = 30]
\end{tikzcd} \simeq \begin{tikzcd}[column sep=large]
 & \beta \circ \alpha \arrow[dd, "(b_2 \circ_V b_1) (\beta) \circ_H " near start, "\circ_H (b_2 \circ_V b_1 (\alpha))" near end, Rightarrow] \arrow[dr, bend left = 30] & \\
\bullet \arrow[ur, dash, bend left = 30] \arrow[dr, dash, bend right = 30] &  & \bullet \\
& \beta'' \circ \alpha'' \arrow[ur, bend right = 30] &
\end{tikzcd}
\end{equation}
Therefore
\begin{equation}
\begin{tikzcd}
 &  \beta \circ \alpha \arrow[dd, "b_2 \circ_V b_1" near start, " (\beta \circ \alpha)" near end, Rightarrow] \arrow[dr, bend left = 30] & \\
\bullet \arrow[ur, dash, bend left = 30] \arrow[dr, dash, bend right = 30] &  & \scriptstyle \bullet \\
& \beta'' \circ \alpha'' \arrow[ur, bend right = 30] &
\end{tikzcd} \,=\,\begin{tikzcd}[column sep=large]
 & \beta \circ \alpha \arrow[dd, "(b_2 \circ_V b_1) (\beta) \circ_H " near start, "\circ_H (b_2 \circ_V b_1 (\alpha))" near end, Rightarrow] \arrow[dr, bend left = 30] & \\
\bullet \arrow[ur, dash, bend left = 30] \arrow[dr, dash, bend right = 30] &  & \bullet \\
& \beta'' \circ \alpha'' \arrow[ur, bend right = 30] &
\end{tikzcd}
\end{equation}
\end{proof}

According to the proposition above, for every $b\in \mathscr{S}$ the map $\varphi_b$ determines a covariant or contravariant functor, that is a homomorphism or anti-homomorphism, of the groupoid $G$ on itself.
In general, given two functors $F_1,F_2 :G_1\rightarrow G_2$ between two groupoids, a natural transformation (see \cite{I-R-2019} for more details, whereas \cite{michor} deals with applications of natural transformations in differential geometry)
$$
\Phi \,\colon\, F_1\,\Rightarrow\, F_2
$$
is an assignment of a morphism in the groupoid $G_2$, say $\Phi(x)\,\colon\, F_1(x)\,\rightarrow\,F_2(x)$, to any object $x\in G_1$, such that
\begin{equation}
\Phi(y)\circ F_{1}(\alpha) = F_{2}(\alpha)\circ \Phi(x),\quad \forall \alpha \in G_1(x,y)\,.
\end{equation}
If every $\Phi(x)$ is an isomorphism (this happens in the case of groupoids), the natural transformation is an equivalence and there exists another natural transformation $\Psi\,\colon\, F_2\,\Rightarrow\, F_1$ defined by $\Psi(x) = \Phi(x)^{-1}$ which is the inverse of $\Phi$.

Among the flat bisections, the identity $b_e \in \mathscr{S}^{\flat}$ determines the identity automorphism of the groupoid $G\rightrightarrows \Omega$, denoted by $\mathbf{1}_G = \varphi_{b_e}$. If there is a natural transformation $\Phi\,\colon\,\varphi_{b_e} \,\rightarrow\,\varphi_b$ between the functor $\varphi_b\,\colon\,G\,\rightarrow\,G$ and $\mathbf{1}_G \,\colon\,G\,\rightarrow\,G$, there exists a morphism $\Phi(x)\,\colon\,x\,\rightarrow\,\varphi_b(x)$ in $G$ such that:
\begin{equation}
\varphi_b(\alpha) = \Phi(y)\circ \alpha \circ (\Phi(x))^{-1}\,.
\end{equation}
Since $\varphi_b$ is a groupoid automorphism, we have that
\begin{equation}
\varphi_b(1_x)= \Phi(x)\circ 1_x \circ (\Phi(x))^{-1} = 1_{t(\Phi(x))}
\end{equation}
where the map $t\circ \Phi\,\colon\,\Omega\,\rightarrow\,\Omega$ must be bijective. This means that the graph of the map $\Phi$ determines a bisection of the groupoid $G\,\rightrightarrows\, \Omega$.

The set of flat bisections determining groupoid automorphisms which are naturally equivalent to the identical automorphism defines a subgroup, $\mathscr{S}^{\flat}_0$ called the group of inner symmetries associated with the symmetroid $S\,\rightrightarrows \,G$.

Now, let us recall that if $G\,\rightrightarrows \,\Omega$ is the groupoid describing the kinematical structure of a quantum system, we called the $2$-groupoid over $G\,\rightrightarrows\,\Omega$ defined by the action of the groupoid $G$ onto itself (on the left, right and inversions) the canonical symmetroid of $G\,\rightrightarrows\,\Omega$ and we denoted it by $S(G)\,\rightrightarrows\,G\,\rightrightarrows\,\Omega$.

The group of flat bisections of $S(G)$ defines a subgroup $\mathscr{S}(G)^{\flat}$ of the group of (anti-)automorphisms of $G\,\rightrightarrows\,\Omega$ called the canonical group of symmetries of $G\,\rightrightarrows\,\Omega$.
The elements of this group play the role of structural symmetries  of the system in the groupoid framework.
The following Th. \ref{thm: dynamical symmetries inside kinematical ones} shows that every inner dynamical symmetry, which is described as an element of $\mathscr{S}_{0}^{\flat}$, is realized as an element of $\mathscr{S}(G)^{\flat}$.

\begin{theorem}\label{thm: dynamical symmetries inside kinematical ones}
Let $S\,\rightrightarrows\,G\,\rightrightarrows\,\Omega$ be a symmetroid describing the microscopic symmetries of a quantum system whose kinematical structure is given by the groupoid $G\,\rightrightarrows\,\Omega$. Then the group of inner symmetries, $\mathscr{S}_0^{\flat}$ associated with $S\,\rightrightarrows\,G$ is a subgroup of $\mathscr{S}(G)^{\flat}$, i.e., the global inner symmetries of a quantum systems are elements of the canonical group of symmetries of $G\,\rightrightarrows\,\Omega$.
\end{theorem}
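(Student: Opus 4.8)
The plan is to realize every inner symmetry produced by $S\rightrightarrows G$ as a flat bisection of the canonical symmetroid $S(G)\rightrightarrows G$. Recall that $\mathscr{S}_0^{\flat}$ and $\mathscr{S}(G)^{\flat}$ are both viewed, through the assignment $b\mapsto\varphi_b$, as subgroups of the group of (anti-)automorphisms of $G\rightrightarrows\Omega$, that this assignment is a group homomorphism (because $\varphi_{b_2\cdot b_1}=\varphi_{b_2}\circ\varphi_{b_1}$ on $G$, exactly as for the action on $\Omega$), and that $\mathscr{S}(G)^{\flat}$ is \emph{by definition} the image of the flat bisections of $S(G)$. Hence it suffices to show: for every $b\in\mathscr{S}_0^{\flat}$ there is a flat bisection $\widehat{b}$ of $S(G)\rightrightarrows G$ with $\varphi_{\widehat{b}}=\varphi_b$; since the image of $\mathscr{S}_0^{\flat}$ is then a subgroup contained in $\mathscr{S}(G)^{\flat}$, the theorem follows.

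First I would unpack the hypothesis exactly as in the discussion preceding the statement. As $b\in\mathscr{S}_0^{\flat}$, the automorphism $\varphi_b\colon G\to G$ (necessarily covariant, being naturally equivalent to the covariant functor $\mathbf{1}_G$) admits a natural transformation $\Phi\colon\mathbf{1}_G\Rightarrow\varphi_b$: a family of transitions $\Phi(x)\colon x\to\varphi_b(x)$ with $\varphi_b(\alpha)=\Phi(y)\circ\alpha\circ\Phi(x)^{-1}$ for every $\alpha\colon x\to y$, and with $\{\Phi(x)\mid x\in\Omega\}$ a bisection of $G\rightrightarrows\Omega$ (equivalently $\varphi_b|_{\Omega}=t\circ\Phi$ is bijective). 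For $\alpha\colon x\to y$ in $G$ I then set
\[
\widehat{b}_s(\alpha)\;:=\;\xi^L_{\Phi(y)}\circ_V\xi^R_{\Phi(x)}\ \colon\ \alpha\ \Rightarrow\ \Phi(y)\circ\alpha\circ\Phi(x)^{-1}=\varphi_b(\alpha),
\]
which is a legitimate element of $S(G)$: $\xi^R_{\Phi(x)}$ applies to $\alpha$ because $\Phi(x)$ and $\alpha$ share the source object $x$, and $\xi^L_{\Phi(y)}$ then applies because $\Phi(y)$ has source object $y=t(\alpha\circ\Phi(x)^{-1})$. The source and target maps of $S(G)\rightrightarrows G$ restrict on $\widehat{b}:=\mathrm{range}\,\widehat{b}_s$ to $\widehat{b}_s(\alpha)\mapsto\alpha$ and $\widehat{b}_s(\alpha)\mapsto\varphi_b(\alpha)$, which are bijective (the second because $\varphi_b$ is an automorphism of $G$); hence $\widehat{b}$ is a bisection of $(S(G),\circ_V)\rightrightarrows G$, and $\varphi_{\widehat{b}}=\varphi_b$ by construction.

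The heart of the matter is to check that $\widehat{b}$ is \emph{flat}, i.e. $\widehat{b}_s(\alpha')\circ_H\widehat{b}_s(\alpha)=\widehat{b}_s(\alpha'\circ\alpha)$ for composable $\alpha\colon x\to y$, $\alpha'\colon y\to z$. I would do this with the explicit description of horizontal composition in $S(G)$ (the action of $G$ on itself on the left and on the right): a $\tau$-free substitution of $S(G)$ acting on $\beta$ is a ``bi-translation'' $\beta\mapsto\delta\circ\beta\circ\gamma^{-1}$, and the horizontal composite of the one attached to $\alpha$ (with $\delta=\Phi(y)$, $\gamma=\Phi(x)$) with the one attached to $\alpha'$ (with $\delta'=\Phi(z)$, $\gamma'=\Phi(y)$) depends on the two ``middle'' translations only through the isotropy element $\kappa:=(\gamma')^{-1}\circ\delta=\Phi(y)^{-1}\circ\Phi(y)=1_y\in G_y$; since $\kappa$ is trivial, the two $\Phi(y)$-factors cancel and the composite equals the bi-translation $\beta\mapsto\Phi(z)\circ\beta\circ\Phi(x)^{-1}$ on $\alpha'\circ\alpha$, which is precisely $\widehat{b}_s(\alpha'\circ\alpha)$. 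With flatness established, $\widehat{b}\in\mathscr{S}(G)^{\flat}$ realizes $\varphi_b$, and the theorem follows as indicated in the first paragraph. I expect this flatness step to be the only non-routine point: the delicate part is tracking the object labels ($x$ versus $\varphi_b(x)$, $y$ versus $\varphi_b(y)$) carefully enough that the relevant horizontal composite in $S(G)$ is actually defined and that the cancellation $(\gamma')^{-1}\circ\delta=1_y$ is visibly what makes $\widehat{b}$ horizontal; everything else is bookkeeping with the composition rules $\xi^{\cdot}_{\eta}\circ_V\xi^{\cdot}_{\eta'}=\xi^{\cdot}_{\eta\circ\eta'}$ and $\xi^L_{\delta}\circ_V\xi^R_{\gamma}=\xi^R_{\gamma}\circ_V\xi^L_{\delta}$ already used to set up $S(G)$.
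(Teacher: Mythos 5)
Your proposal is correct and follows essentially the same route as the paper: you take the natural transformation $\Phi$ attached to an inner symmetry and realize $\varphi_b$ by the bisection $\alpha\Rightarrow\Phi(y)\circ\alpha\circ(\Phi(x))^{-1}$ of the canonical symmetroid, built as the (vertical) product of the left-translation and right-translation bisections $\xi^L_{\Phi(y)}$ and $\xi^R_{\Phi(x)}$, exactly the factorization $b_\Phi=b^L_\Phi\cdot b^R_\Phi$ used in the paper. The only difference is that you spell out the flatness check (the cancellation $\Phi(y)^{-1}\circ\Phi(y)=1_y$ in the horizontal composite), which the paper leaves implicit; that is a welcome addition, not a divergence in method.
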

\begin{proof}
The proof of this theorem is obtained by showing that any bisection whose associated injective function $(b_{\Phi})_s\,\colon\,G\,\rightarrow\,S$ reads
\begin{equation}
(b_{\Phi})_s(\alpha)\,\colon\,\alpha\,\Rightarrow \Phi(y)\circ \,\alpha \,\circ (\Phi(x))^{-1}\,,
\end{equation}
belongs to the group $\mathscr{S}(G)^{\flat}$. In the formula above $\alpha\,\colon\,x\,\rightarrow\,y$ is a transition in $G$ and $\Phi\,\colon\,\Omega\,\rightarrow\,G$ is the map associated with a natural transformation.

Indeed, the maps:
\begin{equation}
\begin{split}
(b^R_{\Phi})_s(\alpha)\,\colon\,\alpha\,&\Rightarrow\,\alpha\circ(\Phi(x))^{-1}\\
(b^L_{\Phi})_s(\alpha)\,\colon\,\alpha\,&\Rightarrow\,\Phi(y)\circ\alpha
\end{split}
\end{equation}
defines two bisections of the canonical groupoid associated with $G$. It is immediate to see that the bisection $b_{\Phi}$ can be written as the product $b_{\Phi}^L\cdot b_{\Phi}^R$, which ends the proof.
\end{proof}

The previous theorem lifts to the abstract setting of groupoids the celebrated Wigner's theorem on the realization of symmetries of quantum systems, i.e., under the assumptions that the family of substitutions describing the symmmetry of a system must be a connected $2$-groupoid, then its associated group of inner symmetries must define a subgroup of the group of inner (anti-)automorphisms of the groupoid of the system, where inner is used in analogy with the inner group of automorphism of a group.

\section{Examples.}

We will end the paper with some examples which will illustrate how to construct symmetroids out of specific groupoids. We will consider finite groupoids in order to avoid technical difficulties which would hide the algebraic aspects of the construction which we have focused on.

\subsection{The canonical symmetroid of the groupoid of pairs $\mathbf{G}(\mathbf{\Omega}_n)$}\label{can_sym_pair}

Let $\Omega_n$ be the finite set of $n$ elements. Let us consider the groupoid of pairs $G(\Omega_n)\rightrightarrows \Omega_n$. As already noticed in the discussion after the Ex.\ref{ex.reconstruction}, the group of bisections of this groupoid is made up of the graphs of the bijective maps over $\Omega_n$.

The canonical symmetroid $S(G(\Omega_n))\,\rightrightarrows\, G(\Omega_n)\,\rightrightarrows\,\Omega_n$ is made up of substitutions
\begin{equation*}
\begin{split}
\xi^R_{\gamma}\,\colon\,\alpha\,\Rightarrow\,\alpha\circ \gamma^{-1}\\
\xi^L_{\gamma}\,\colon\,\beta\,\Rightarrow\,\gamma\circ\beta\\
\tau_{\gamma}\,\colon\,\gamma\,\Rightarrow\,\gamma^{-1}\,,
\end{split}
\end{equation*}
where
\begin{equation*}
\alpha\,\colon\,x\,\rightarrow\,z,\quad \beta \,\colon\,z'\,\rightarrow\,x,\quad \gamma\,\colon\,x\,\rightarrow\,y\,.
\end{equation*}
A straightforward computation shows that the following commutation relations hold:
\begin{equation}
\begin{split}
&\xi^R_{\gamma_1} \circ_V \xi^L_{\gamma_2} = \xi^L_{\gamma_2} \circ_V \xi^R_{\gamma_1}\\
&\xi^L_{\gamma}\circ_V \tau_{\alpha} = \tau_{\alpha\circ \gamma^{-1}}\circ_V \xi^R_{\gamma}\\
&\xi^R_{\delta}\circ_V \tau_{\alpha} = \tau_{\delta\circ\alpha}\circ_V \xi^L_{\delta}
\end{split}
\end{equation}
where $\delta\,\colon\,z\,\rightarrow\,w$ and $\gamma_1,\:\gamma_2$ are generic transitions in $G(\Omega_n)$. Due to these properties, any element of the canonical symmetroid can be rearranged in such a way that we have the composition of a left multiplication, a right multiplication and an inversion. Then, the group of bisections $\mathscr{S}(G(\Omega_n))$ is generated by the graphs of functions which assume the following expressions:
\begin{equation}
\begin{split}
(b^L_{\Phi})_s((y,x))\,\colon\,(y,x)\,&\Rightarrow \,(\Phi(y),y)\circ(y,x) = (\Phi(y),x)\\
(b^R_{\Phi})_s((y,x))\,\colon\,(y,x)\,&\Rightarrow \,(y,x)\circ((\Phi(x),x)^{-1})=(y,\Phi(x))\\
(b^{\tau})_s((y,x)) \,\colon\,(y,x)\,&\Rightarrow\,(x,y)\,,
\end{split}
\end{equation}
where $\Phi\,\colon\,\Omega_n\,\rightarrow\,\Omega_n$ is a bijective map over $\Omega_n$. In particular, the bisection $b^{\tau}$ is the groupoid anti-automorphism which maps $G(\Omega_n)$ to its opposite groupoid, the arrows of which are all reversed.

On the other hand, the subgroup $\mathscr{S}(G(\Omega_n))^{\flat}$ of flat bisections of the canonical symmetroid is generated by the graphs of the following functions:
\begin{equation}
(b^{Ad}_{\Phi})_s((y,x))\,\colon\,(y,x)\,\Rightarrow\, (\Phi(y),y)\circ(y,x)\circ((\Phi(x),x)^{-1}) = (\Phi(y),\Phi(x))\,.
\end{equation}
This is the group of canonical symmetries of the groupoid of pairs $G(\Omega_n)$.

\subsection{The canonical symmetroid of an Action Groupoid.}

In this second example we will consider a finite action groupoid $\mathbf{G}\rightrightarrows \Omega_n$ over the set of the first $n$ natural numbers (see \cite{landsman, lie-groupoids, I-R-2019} for more details about action groupoids). For the sake of simplicity we will consider the cyclic group $\Gamma$ of order $n$ generated by the element $r$, i.e.,
\begin{equation*}
\Gamma = \left\lbrace e,\, r,\, r^2,\, \cdots \,,\,r^{n-1}  \right\rbrace\,.
\end{equation*}
The action of $\Gamma$ on $\Omega_n$ is generated by the bijective map $\mu_{r}\,\colon\,\Omega_n\,\rightarrow\,\Omega_n$ which is expressed as follows:
\begin{equation}
\mu_r(j) = \left[ j+1 \right]_n\,,
\end{equation}
where the symbol $\left[ j+1 \right]_n$ denotes the number $j+1$ modulo $n$. This action is free and transitive and a transition $\alpha\in \mathbf{G} = \Gamma \times \Omega_n$ is represented via the triple
\begin{equation}
\alpha = \left( \mu_g(j), \, g,\,j \right)\,.
\end{equation}
Since the action is free and transitive the isotropy groups are trivial and the groupoid is connected (this groupoid is actually isomorphic to the pair groupoid $G(\Omega_n)\,\rightrightarrows\,\Omega_n$). Any bisection of this groupoid is the graph of a bijective map of the unique orbit, $\mathscr{O}=\Omega_n$, of the action of the group $\Gamma$ on $\Omega_n$ via the map $\mu_{\cdot}$. Therefore, a bisection $b\in \mathscr{G}$ is represented by the map
\begin{equation}
\left( b_{\Phi}\right)_s (j) = (\Phi(j), g, j)\,,
\end{equation}
where $g\in \Gamma$ is the unique group element connecting $j$ and $\Phi(j)$. As we have already seen in example \ref{can_sym_pair}, the canonical symmetroid $S(\mathbf{G})\rightrightarrows\mathbf{G}\rightrightarrows \Omega_n$ is generated by the composition of substitutions of the type:
\begin{equation}
\xi^R_{\alpha},\; \xi^L_{\alpha},\;\tau_{\alpha}\,.
\end{equation}
Then, the associated group of bisections, $\mathscr{S}$, is given by the right (left) composition $b^{R}_{\Phi} (b^L_{\Phi})$ with a bisection of the groupoid $\mathbf{G}$, say $\Phi \in \mathscr{G}$, and by the inversion transformation $b^{\tau}$. The subgroup of canonical symmetries, i.e., the subgroup of flat bisections, contains bisections which are represented by the map
\begin{equation}
\begin{split}
&(b_{\Phi}^{Ad})_s((\mu_g(j), \, g,\, j)) \,\colon\, (\mu_g(j), \, g,\, j)\,\Rightarrow \\
\Rightarrow (\left(\Phi  \circ\mu_g\right)&(j),\,\gamma,\,\mu_g(j))\circ (\mu_g(j),\, g,\,j )\circ (j,\, \gamma^{-1}, \Phi(j))\,= \\
& = \left( \left(\Phi  \circ\mu_g\right)(j),\, \gamma g \gamma^{-1},\, \Phi(j) \right)
\end{split}
\end{equation}
In particular, the action on a unit $1_j$ provides the following result
\begin{equation}
(b_{\Phi}^{Ad})_s((j,\,e,\,j))\,\colon\,(j,\,e,\,j) \,\Rightarrow \, (\Phi(j), \, e,\, \Phi(j))\,,
\end{equation}
which clearly shows the fact that the isotropy normal subgroupoid is preserved by the action of the group $\mathscr{S}^{\flat}$.

\subsection{The swap symmetroid}
In this last example we are going to present the symmetroid associated with the direct product of two isomorphic groupoids and which we call the swap symmetroid since it should describe swapping transformations between two identical quantum systems. For the sake of simplicity we consider the direct product of two groupoids of pairs over the finite set made up of only 2 elements, i.e.,
\begin{equation*}
\mathbf{G} = G(\Omega_2)\times G(\Omega_2)\rightrightarrows \Omega_2\times\Omega_2 =:\mathbf{\Omega}\,.
\end{equation*}
This groupoid contains the following transitions:
\begin{equation*}
\begin{split}
(\alpha_a, \alpha_b)&,\;\;(\alpha_a,\alpha^{-1}_b), \;\;(\alpha^{-1}_a,\alpha_b),\;\;(\alpha^{-1}_a,\alpha^{-1}_b),\;\; (\alpha_a, 1_{+_b}), \;\;(\alpha_a, 1_{-_b}) \\
(\alpha^{-1}_a,1_{+_a}),&\;\; (\alpha^{-1}_a,1_{-_b}),\;\;(1_{+_a},\alpha_b),\;\; (1_{+_a},\alpha^{-1}_b),\;\; (1_{-_a},\alpha_b),\;\; (1_{-_a},\alpha^{-1}_b)\\
& (1_{+_a},1_{+_b}),\;\; (1_{+_a},1_{-_b}),\;\; (1_{-_a},1_{+_b}),\;\; (1_{-_a},1_{-_b})\,,
\end{split}
\end{equation*}
with the obvious composition rule, where
\begin{equation*}
\alpha_a\,\colon\,-_{a}\,\rightarrow\,+_{a}\,,\quad \alpha_{b}\,\colon\, -_{b}\,\rightarrow\,+_{b} \,.
\end{equation*}
Let us consider, now, the following set of non trivial substitutions:
\begin{equation*}
\begin{split}
\xi^{\sigma}_{(\alpha,+)}\,\colon\, (\alpha_a,1_{+_b})\,\Rightarrow \, (1_{+_a},\alpha_b) \quad \xi^{\sigma}_{(\alpha,-)}\,\colon\, (\alpha_a,1_{-_b})\,\Rightarrow \, (1_{-_a},\alpha_b) \\
\xi^{\sigma}_{(\alpha^{-1},+)}\,\colon\, (\alpha^{-1}_a,1_{+_b})\,\Rightarrow \, (1_{+_a},\alpha^{-1}_b) \quad
\xi^{\sigma}_{(\alpha^{-1},-)}\,\colon\, (\alpha^{-1}_a,1_{-_b})\,\Rightarrow \, (1_{-_a},\alpha^{-1}_b)\\
\xi^{\sigma}_{(\alpha,\alpha^{-1})}\,\colon\, (\alpha_a,\alpha^{-1}_b)\,\Rightarrow \, (\alpha^{-1}_a,\alpha_b)\quad
\xi^{\sigma}_{(+,-)}\,\colon\, (1_{+_a},1_{-_b})\,\Rightarrow \, (1_{-_a},1_{+_b})
\end{split}
\end{equation*}
along with their inverse substitutions and the trivial ones, which will be denoted $\xi^{e}_{\cdot}$. The source and target maps, $s_1,t_1$, are the obvious ones, and the vertical composition of two substitutions $\xi^{\sigma}_j\circ_V\xi^{\sigma}_i$ is defined whenever $s_1\left( \xi^{\sigma}_j \right)= t_1\left( \xi^{\sigma}_i \right)$. This implies, for instance, that each non trivial substitution is composable only with its inverse or with the suitable unit substitutions.

Straightforward computations, like the following one
\begin{equation*}
\begin{split}
\left( \xi^{\sigma}_{(\alpha,+)} \circ_V \left( \xi^{\sigma}_{(\alpha,+)} \right)^{-1} \right) &\circ_H \left( \xi^{\sigma}_{(\alpha^{-1},+)} \circ_V \left( \xi^{\sigma}_{(\alpha^{-1},+)} \right)^{-1} \right) = \\
=\xi^e_{(\alpha,+)} &\circ_H \xi^{e}_{(\alpha^{-1},+)} = \xi^e_{(+,+)} = \\
= \left( \xi^{\sigma}_{(\alpha,+)} \circ_H \xi^{\sigma}_{(\alpha^{-1},+)} \right) &\circ_V \left( \left( \xi^{\sigma}_{(\alpha,+)}\right)^{-1} \circ_H \left( \xi^{\sigma}_{(\alpha^{-1},+)} \right)^{-1} \right) = \xi^e_{(+,+)}\circ_V \xi^{e}_{(+,+)}\,,
\end{split}
\end{equation*}
show that the exchange identity is satisfied and the above set of substitutions determines a symmetroid $S\,\rightrightarrows\,\mathbf{G}\,\rightrightarrows \,\mathbf{\Omega}$.

As already seen, a bisection $b\subset S $ determines an injective map $b_s\,\colon\,\mathbf{G}\,\rightarrow \,S$ such that $t_1\circ b_s$ is a bijective map on $\mathbf{G}$. In this case it is possible to associate with a transition in $\mathbf{G}$, for instance $(\alpha_a, 1_{+_b})$, up to two substitutions, either $\xi^{\sigma}_{(\alpha, \, +)}$ (if possible) or $\xi^{e}_{(\alpha, \, +)}$. A bisection $b$ is flat if the corresponding function $b_s$ sastisfies the property in Def.\ref{flat_bisection}. Recalling that the transition $(\alpha_a, 1_{+_b})\in \mathbf{G}$ is composable with the transitions
\begin{equation}\label{ex:composable_transitions}
(1_{-_a},1_{+_b}),\: (\alpha_a^{-1},1_{+_b}),\: (1_{-_a}, \alpha_b),\:  (\alpha_a^{-1},\alpha_b),\: (1_{+_a}, \alpha_b^{-1})\,,
\end{equation}
if $b_s((\alpha, +)) = \xi^{\sigma}_{(\alpha,+)}$ then a flat bisection must associate a non trivial substitution to all the other transitions which are composable with it (let us note that the last transition in Eq.(\ref{ex:composable_transitions}) is composable on the left of the transition $(\alpha_a,\,1_{+_b})$, while the others on the right). Viceversa, if $b_s((\alpha, +)) = \xi^{e}_{(\alpha,+)}$, all the other images must be trivial bisections in order to satisfy property \ref{flat_bisection}. This implies that the group of flat bisections $\mathscr{S}^{\flat}$ contains only two elements, i.e.,
$$
\mathscr{S}^{\flat} = \left\lbrace b_e , b_{\sigma} \right\rbrace
$$
such that $b_{\sigma}\cdot b_{\sigma} = b_e$. Let us consider, now, the map $\Phi\,\colon\,\mathbf{\Omega}\,\rightarrow \mathbf{G}$ such that
\begin{equation*}
\begin{split}
\Phi((+_a,+_b)) = (1_{+_a} , 1_{+_b}),\;\; \Phi((+_a,-_b)) = (\alpha^{-1}_a,\alpha_b) \\
\Phi((-_a,+_b)) = (\alpha_a,\alpha^{-1}_b),\;\;\Phi(-_a,-_b) = (1_{-_a},1_{-_b})\,.
\end{split}
\end{equation*}
A direct computation shows that the automorphism $\varphi_{\sigma}$ associated with $b_{\sigma}$ is naturally equivalent to the identity, the natural transformation being determined by the above map $\Phi$. This means that $\mathscr{S}^{\flat}=\mathscr{S}^{\flat}_0$ and it is a subgroup of the canonical group of symmetries of the symmetroid $S(\mathbf{G})\rightrightarrows \mathbf{G}\rightrightarrows \mathbf{\Omega}$.

\section{Conclusions.}
In this paper we analysed how to describe symmetries of (quantum) systems in the groupoid approach to Schwinger's picture of Quantum Mechanics. Since a quantum system is described by a groupoid $G\,\rightrightarrows \Omega$, the first possible method to define a symmetry is microscopic: one considers a family of substitutions that exchange the transitions in $G$ among themselves. A detailed list of axioms that these substitution must satisfy is provided and it is shown that the algebraic structure suited to this microscopic point of view is that of a 2-groupoid, say $S\rightrightarrows G \rightrightarrows \Omega$.

On the other hand, symmetries of physical systems are usually implemented via groups of transformations. This approach to the description of symmetries is recovered in the groupoid picture of Quantum Mechanics considering the notion of bisection: the set $\mathscr{S}$ of bisections of the symmetroid $S\rightrightarrows G$ is a group, and the subgroup, $\mathscr{S}^{\flat}$, of flat bisections, which are those respecting the composition law of $G$, acts on $G$ via groupoid (anti-)automorphisms. Moreover the relation between groupoids and its group of bisections is even deeper, since it has been shown that a groupoid can be reconstructed from the group of bisections and its space of objects. This global point of view also allows to present an abstract extension of Wigner's theorem within the groupoid approach, where a canonical symmetroid, denoted $S(G)\rightrightarrows G$, emerges as a universal environment containing every other group of symmetries of the system.

The main focus of this work has been on the algebraic aspects connected to the theory of 2-groupoids and their groups of bisections. This is clearly only the first step towards a more complete analysis of symmetries in Schwinger's picture of Quantum Mechanics, and future efforts will be devoted to the extension of the results presented in this paper to the realm of continuous and Lie groupoids \cite{lie-groupoids}. Moreover, the theory illustrated here, should be extended to include also symmetries of the algebra of observables. A preliminary result along this research line, consists in the following fact: The set of characteristic functions supported on the bisections of a groupoid $G\rightrightarrows \Omega$ is a subset of its groupoid algebra. It can be easily proven, at least in the case of discrete groupoids, that these functions close on a unitary representation of the group of bisections $\mathscr{G}$ of the groupoid $G$. However, this preliminary activity requires a more systematic approach which will be postponed to forthcoming works.

\section*{Acknowledgments} F.D.C. and A.I. would like to thank partial support provided by the MINECO research project MTM2017-84098-P and QUITEMAD++, S2018/TCS-A4342. A.I. and G.M. acknowledge financial support from the Spanish Ministry of Economy and Competitiveness, through the Severo Ochoa Programme for Centres of Excellence in RD(SEV-2015/0554). G.M. would like to thank the support provided by the Santander/UC3M Excellence Chair Programme 2019/2020, and he is also a member of the Gruppo Nazionale di Fisica Matematica (INDAM), Italy. L.S. would like to thank the support provided by Italian MIUR through the Ph.D. Fellowship at Dipartimento di Matematica R.Caccioppoli.

\end{document}